\DeclareMathOperator{\CardOp}{Card}
\newcommand{\Card}[1]{\CardOp\left(#1\right)}
\theoremstyle{definition}
\theoremstyle{plain}
\newtheorem*{lemma}{Lemma}
\newtheorem*{theorem*}{Theorem}
\def\equationautorefname~#1\null{Eq.\,(#1)\null}
\def\figureautorefname~#1\null{Fig.\,#1\null}
\def\sectionautorefname~#1\null{App.\,#1\null}
\DeclarePairedDelimiter\doublebrackets\llbracket\rrbracket
\newcommand{\bigO}[1]{\ensuremath{\mathop{}\mathopen{}O\mathopen{}\left(#1\right)}}
\begin{document}

\title{Network Requirements for Distributed Quantum Computation}

\author{Hugo Jacinto}
\affiliation{Alice \& Bob, 53 boulevard du Général Martial Valin, \num[detect-all]{75015} Paris, France}
\affiliation{Université Paris--Saclay, CNRS, CEA, Institut de Physique Théorique, \num[detect-all]{91191} Gif-sur-Yvette, France}
\author{Élie Gouzien}
\affiliation{Alice \& Bob, 53 boulevard du Général Martial Valin, \num[detect-all]{75015} Paris, France}
\author{Nicolas Sangouard}
\affiliation{Université Paris--Saclay, CNRS, CEA, Institut de Physique Théorique, \num[detect-all]{91191} Gif-sur-Yvette, France}
\date{April 15, 2025}

\begin{abstract}
Physical constraints and engineering challenges, including wafer dimensions, classical control cabling, and refrigeration volumes, impose significant limitations on  the scalability of quantum computing units.
    As a result, a modular quantum computing architecture, comprising small processors interconnected by quantum links, is emerging as a promising approach to fault-tolerant quantum computing.
However, the requirements that the network must fulfill to enable distributed quantum computation remain largely unexplored.
We consider an architecture tailored for qubits with nearest-neighbor physical connectivity, leveraging the surface code for error correction and enabling fault-tolerant operations through lattice surgery and magic state distillation.
    We propose measurement teleportation as a tool to extend lattice surgery techniques to qubits located on different computing units interconnected via Bell pairs.
Through memory simulations, we build an error model for logical operations and deduce an end-to-end resource estimation of Shor’s algorithm over a minimalist distributed architecture.
    Concretely, for a characteristic physical gate error rate of ${10}^{-3}$, a processor cycle time of $1$ microsecond, factoring a 2048-bit RSA integer is shown to be possible with 379 computing processors, each made with \num{89781} qubits, with negligible space and time overhead with respect to a monolithic approach without parallelization, if 70 Bell pairs are available per cycle time between each processor with a fidelity exceeding \SI{98.4}{\percent}.
\end{abstract}

\maketitle

\paragraph{Introduction---}
Quantum algorithms solving practical problems typically require between ${10}^7$ and ${10}^{11}$~\cite{Beverland2022, Dalzell2023, scholten2024assessingbenefitsrisksquantum} elementary gates to be executed.
Available quantum processors, with typical error rates of ${10}^{-3}$ per operation~\cite{Quantinuum2023, Acharya2023, Kim2023}, must incorporate fault-tolerant quantum error correction to implement such sequences of operations reliably~\cite{Campbell2017}.
These methods come however, with a significant qubit overhead.
Considering the methods of surface codes, lattice surgery and magic state distillation~\cite{Litinski_2019} and an error rate of ${10}^{-3}$, it is estimated that algorithms requiring ten thousand logical qubits ultimately demand tens of millions of physical qubits~\cite{Gidney_2021}.
Although alternative quantum error correction protocols have been recently proposed~\cite{Bravyi2024, gu2024optimizingquantumerrorcorrection, gidney2023yokedsurfacecodes} and various physical systems are envisioned to build a fault-tolerant quantum computer, scaling in a monolithic approach will be a daunting challenge for all of them.

\smallskip 
\noindent
A solution to this scaling issue consists in using a network of small scale processors.
Achieving true scalability then reduces to the design of a processing unit with a small number of logical qubits and an interface that can be used to interconnect the different units by means of network links.
Although algorithms can be split into pieces that can be connected classically by mimicking quantum operations through classical connections~\cite{CarreraVazquez2024}, it comes with an exponential cost~\cite{jing2024circuitknittingfacesexponential,Piveteau_2024} and the generic approach uses quantum links.
The basic idea, already involving experimental efforts~\cite{remote_cnot_ibm,Storz2023, Yan_2022}, is to establish Bell pairs between the processing units so that qubits can be transferred or gates can be operated between processing units by means of teleportation~\cite{Caleffi_2024,distributed_qec_cosmic_ray}.
Because the links are noisy and lossy, entanglement distillation protocols~\cite{Dur_2003, Campbell_2007, Krastanov_2019} were initially thought to be necessary to convert multiple low quality Bell pairs into a small number of high fidelity Bell pairs and guarantee high quality operations between processors.
These protocols are either simple but increase the memory error per cycle or are more complex but increase the error correction cycle time~\cite{Liang_2007,Fujii_2012,Nickerson_2013,Nickerson_2014}.
An alternative is to distribute the computation at the level of the error correction.
Recent results, showed that logical qubits encoded in unrotated surface code patches in distinct processing units can be fault-tolerantly connected with noise threshold along their shared interface reaching up to \SI{10}{\percent}~\cite{ramette2023faulttolerantconnectionerrorcorrectedqubits, Sinclair_2024}.
Although these results are encouraging, the central questions about how many Bell pairs are required and how faithful they need to be to execute a quantum algorithm in a distributed way with comparable space and time resources with respect to the monolithic approach remain open.

\smallskip 
\noindent
We here consider a distributed architecture where each processing unit consists of a 2D grid of physical qubits.
The layout involves columns of logical data qubits encoded in rotated surface code patches~\cite{fowler2019lowoverheadquantumcomputation} that are separated by physical routing qubits to achieve all-to-all logical connectivity through lattice surgery.
Specifically, we show how logical multi-qubit gates operating on qubits located on separate units can be implemented fault-tolerantly by combining lattice surgery and a technique that we introduce as ``measurement teleportation'': using a Bell pair to measure a non-local operator without further interaction between remote subsystems. From circuit-level simulations of logical memory experiments, we derive an error model for these multi-qubit gates operating remotely.
We then consider an execution of Shor's factorization algorithm on this distributed architecture and provide a detailed estimation of the runtime, qubit and Bell pair numbers needed to factor large RSA integers.
Using plausible physical assumptions for processors based on superconducting qubits, namely, a characteristic physical gate error rate of ${10}^{-3}$ and a surface code cycle time of 1 microsecond, we show that factoring a 2048-bit RSA integer is feasible with 379 processors, each comprising \num{89746} physical qubits.
This can be achieved with negligible overhead compared to using the same layout in a monolithic approach, provided that 70 Bell pairs are distributed per cycle time between each processor with a fidelity exceeding \SI{98.4}{\percent}.

\medskip 
\noindent
\paragraph{Circuit-level simulation of the rotated surface code---}
We first focus on the rotated version of a distance $d$ surface code~\cite{orourke2024comparepairrotatedvs}, namely each logical qubit is encoded in a $d\times d$ square patch, where $d^2$ physical qubits sit on the vertices.
$d^2-1$ stabilizer generators, which are given by the 4-qubit and 2-qubit tensor products of either $X$ or $Z$ Pauli operators, are on the plaquette.
The logical Pauli operators are strings of $X$ and $Z$ Pauli operators along any vertical and horizontal directions of the patch, respectively.
Time is discretized into rounds, with each round corresponding to the measurement of all stabilizers, and the resulting outcomes referred to as a syndrome.
In order to deal with faulty measurements, rounds are repeated $O(d)$ times~\cite{Dennis_2002} before being analyzed through classical decoding to identify appropriate correction.

\smallskip 
\noindent
To evaluate accurately the performance of the rotated surface code, we perform memory experiment simulations at the circuit level.
Concretely, the results of the $Z$ stabilizers are obtained by means of CNOT gates, each controlled by the data qubits and targeting an ancilla qubit initially prepared in the $\ket{0}$ state and finally measured in the $Z$ basis.
The CNOTs are reversed for the $X$ stabilizer measurements, with the ancilla qubit prepared initially in the $\ket{+}$ state and finally measured according to the $X$ basis.
The order of the CNOT gates is chosen so that 4 time steps are enough to extract the results of all stabilizers while avoiding hook errors and preserving the code distance~\cite{Dennis_2002,Fowler_2012, Yu_2014}.
We use the fully depolarizing noise model characterized by the parameter $p$, which we refer to as the physical error rate, see \autoref{annexe:noise model} for details.
For the decoding, we choose a minimum-weight-perfect-matching decoder~\cite{higgott2021pymatchingpythonpackagedecoding}.
We consider patches encoding either a logical $\ket{+}$ or $\ket{0}$ and we compute the logical error rate separately.
In both cases, we retrieve a sub-threshold behavior where errors are exponentially suppressed~\cite{Dennis_2002}.
Specifically, the per-round probability to have a $X$ logical error on $d\times d$ square patch is extracted from a $\ket{0}$ memory experiment with 
\begin{equation}\label{eq:log error rate model}
    P_L^X \approx \alpha {\left(\frac{p}{p_{\text{th}}}\right)}^{\frac{d+1}{2}}
\end{equation}
where $p_{\text{th}} = (7.43\pm 0.08) \times {10}^{-3}$ and $\alpha = (5.0 \pm 0.2) \times {10}^{-2}$, see \autoref{annexe: fit methodology} for details on the fitting methodology.
Similarly, $P_L^Z$ associated to $\ket{+}$ is the same, up to uncertainty on the fitted parameters.
Conservatively, we attribute a logical error rate per round $P_L =P_L^X +P_L^Z$ to the storage of an arbitrary state.

\medskip 
\noindent
\paragraph{Remote stabilizer measurement circuit via measurement teleportation---}\label{section frontier}
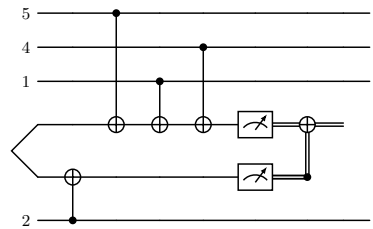
\begin{figure}
     \centering
        \scalebox{0.7}{
        \begin{quantikz}
        \lstick{$5$}&           &\ctrl{3}  &          &         &       &&&\\
        \lstick{$4$}&           &          &          &\ctrl{2} &       &&&\\
        \lstick{$1$}&           &          &\ctrl{1}  &         &       &&&\\
        \makeebit{} &           &\targ{}   & \targ{}  &\targ{}  &\meter{}&\targ{}\setwiretype{c}&\\
                    &\targ{}   &          &          &         &\meter{}&\cwbend{-1}\setwiretype{c}\\
        \lstick{$2$}&\ctrl{-1}    &          &          &         &       &&&
        \end{quantikz}
        }
    \caption{Teleported $Z$-stabilizer measurement circuit.
    The first four qubits are located on the same computing unit, while the last two are situated on a separate unit.
    The connection between the fourth and fifth qubits indicates that they are prepared in a Bell state.
    The data qubit numbering follows that of \autoref{fig:seam}.}
    \label{fig:syndrome extraction}
\end{figure}
A key challenge when physical qubits are distributed across remote computing units is the efficient extraction of stabilizer measurement results, which is typically achieved through teleportation.
This involves either teleporting a quantum state from one unit to another or teleporting a CNOT gate between an ancilla qubit and a data qubit~\cite{Caleffi_2024}.
We propose a strategy utilizing two ancilla qubits, each located in a separate unit and initially prepared in a Bell state, $1/\sqrt{2}(\ket{00}+\ket{11})$.
For $Z$-stabilizer, local CNOT gates are then applied between the data qubit and the ancilla qubit within the same unit, with the ancilla serving as the target.
The stabilizer measurement result is extracted by measuring the ancilla qubits in the $Z$ basis and combining the outcomes with a modulo-2 addition (XOR), see \autoref{Annexe:Teleported measurement}.
To extract the $X$-stabilizer result, the roles of control and target in the CNOT gates are reversed, followed by $X$-basis measurements of the ancilla qubits.
In summary, a Bell pair enables a joint measurement between distant qubits without direct interaction.
The outcome is known only after the classical communication of measurement results.
This process is referred to as remote stabilizer measurement via measurement teleportation.

\begin{figure}
    \centering
    \includegraphics[width=\linewidth]{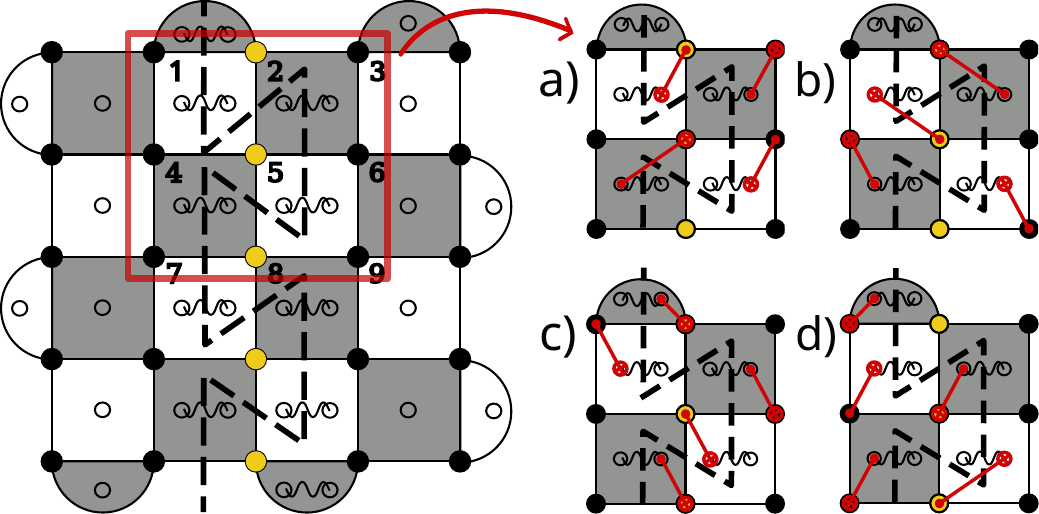}
    \caption{Distributed rotated surface code definition.
    Filled dots represent data qubits, and empty dots represent ancillary qubits.
    Yellow filed-dot represent data qubits belonging to the ``seam'', the line of data qubits between the two columns of Bell pair ancillae.
    Empty dots connected by a sinusoidal line correspond to ancillary qubits prepared in a Bell state.
    The dark grey plaquettes indicates the X-stabilizers over the data qubits at their edges, while the white plaquettes represent $Z$-stabilizers.
    The dashed line represents the physical separation between the two computing units.
    The data qubit numbering help defining the order with which the CNOTs of the extraction circuits are applied, see \autoref{fig:syndrome extraction}.
    All stabilizer can be measured in the four times steps shown in figures labels a),b),c) and d).}
\label{fig:seam}
\end{figure}

\smallskip 
\noindent
The next question is how many data qubits need to be connected to each ancilla in the Bell pair? At first glance, one might assume that a single vertical line of ancillary qubits could be replaced with Bell pairs.
This would mean connecting two data qubits to each ancilla, requiring only $d_X$ Bell pairs, where $d_X$ is the distance with respect to $X$ errors.
However, such a configuration suffers from hook errors, namely one $X$ error on a Bell pair used to measure a $X$-stabilizer would translate into two $X$-errors on two data qubits located on the same vertical line of the surface code patch.
This effectively halves the 
$X$-distance, as detailed in \autoref{naive seam}.
The stabilizer measurement circuit presented in \autoref{fig:syndrome extraction} provides a solution.
The idea is to connect three qubits to the same ancilla in the right order so that one $X$-error on the Bell pair would translate in the worst case into two errors on qubits belonging to the same horizontal line of the surface code patch, hence preserving the effective distance.
This effectively takes $2d_X$ Bell pairs.
Note that the implementation of the CNOTs can still be organized so that the syndrome can be extracted in four time steps, see \autoref{fig:seam}.

 \medskip 
\noindent
\paragraph{Circuit-level simulation of the distributed rotated surface code---}\label{simulation results}
With the syndrome extraction circuits specified, the goal is now to evaluate the performance of the distributed rotated surface code through memory experiment simulations.
We consider square patches with various distances and two columns of Bell pair ancillae aligned with $X_L$ logical operators, see \autoref{fig:seam}.
The noise model remains the same as before for idle qubits, measurements, single-qubit gates, and CNOT operations.
For the noise affecting the ancilla qubits, we consider two-qubit depolarizing noise on the Bell pairs, characterized by an occurrence probability $p_{\text{Bell}}$.
This corresponds to a noisy Bell pair fidelity given by $\mathcal{F}\left(\rho, \ketbra{\Phi^+}\right)=1-\frac{4}{5}p_{\text{Bell}}$.
As detailed in \autoref{Annexe noise Bell}, this preparation noise only propagates to the data qubits located at the ``seam'', i.e.\@ between the two lines of teleported stabilizers.
Decoding is still carried out using a minimum-weight perfect matching decoder.
As before, logical error rates $P_L^X$ and $P_L^Z$ are characterized by storing logical $\ket{0}$ and $\ket{+}$ respectively.
For large distances, Bell pair infidelity has a negligible effect on the $Z$-logical error rate, as errors on the seam only affect one data qubit of any logical Pauli $Z$ operator.
Therefore, in this section we restrict our analysis to $X$-logical errors.

\smallskip
\noindent
Rough combinatorics to count error chains in the $X$-error matching graph of a rotated surface code under a phenomenological noise model motivates the following ansatz for the $X$-logical error rate, see \autoref{annexe: ansatz}
\begin{multline}\label{eq:split log er rate model}
    P_{L}^X \approx
    \alpha_1{\left(\frac{p_{\text{Bell}}}{p_{\text{Bell}}^*}\right)}^{\frac{d+1}{2}} +  \alpha_2{\left(\frac{p}{p^*}\right)}^{\frac{d+1}{2}} \\ 
    + \alpha_3 \sum_{1\leq i\leq d}{\left(\frac{p_{\text{Bell}}}{p_{\text{Bell}}^{**}}\right)}^{\frac{i}{2}} {\left( \frac{p}{p^*}\right)}^{\frac{d+1-i}{2}}.
\end{multline}
The parameter $p$ represents the error rate for all operations except the Bell pair preparation that is characterized by $p_{\text{Bell}}$, see \autoref{annexe:noise model}.
Threshold quantities are indicated with a superscript $^*$.
Specifically, we define a pseudo-threshold as $p_{\text{Bell}}^{**}=\frac{p_{\text{Bell}}^*}{1+\frac{\alpha_c}{1-\sqrt{p/p^*}}}$.
$\alpha_1$, $\alpha_2$, $\alpha_3$ and $\alpha_c$ are constants that need to be determined through fitting.
Despite coming from a simplified noise model, this ansatz appears to capture most of the underlying noise diffusion processes.
Transitioning to a circuit-level noise model induces additional diagonal edges in the matching graph due to CNOT noise diffusion.
The error chains counting leading to \autoref{eq:split log er rate model} still stand on this knotty graph but are characterized by different combinatorial constants resulting in modified thresholds~\cite{Dennis_2002}.

\begin{figure}[ht!]
    \centering
    \includegraphics[width=0.9\linewidth]{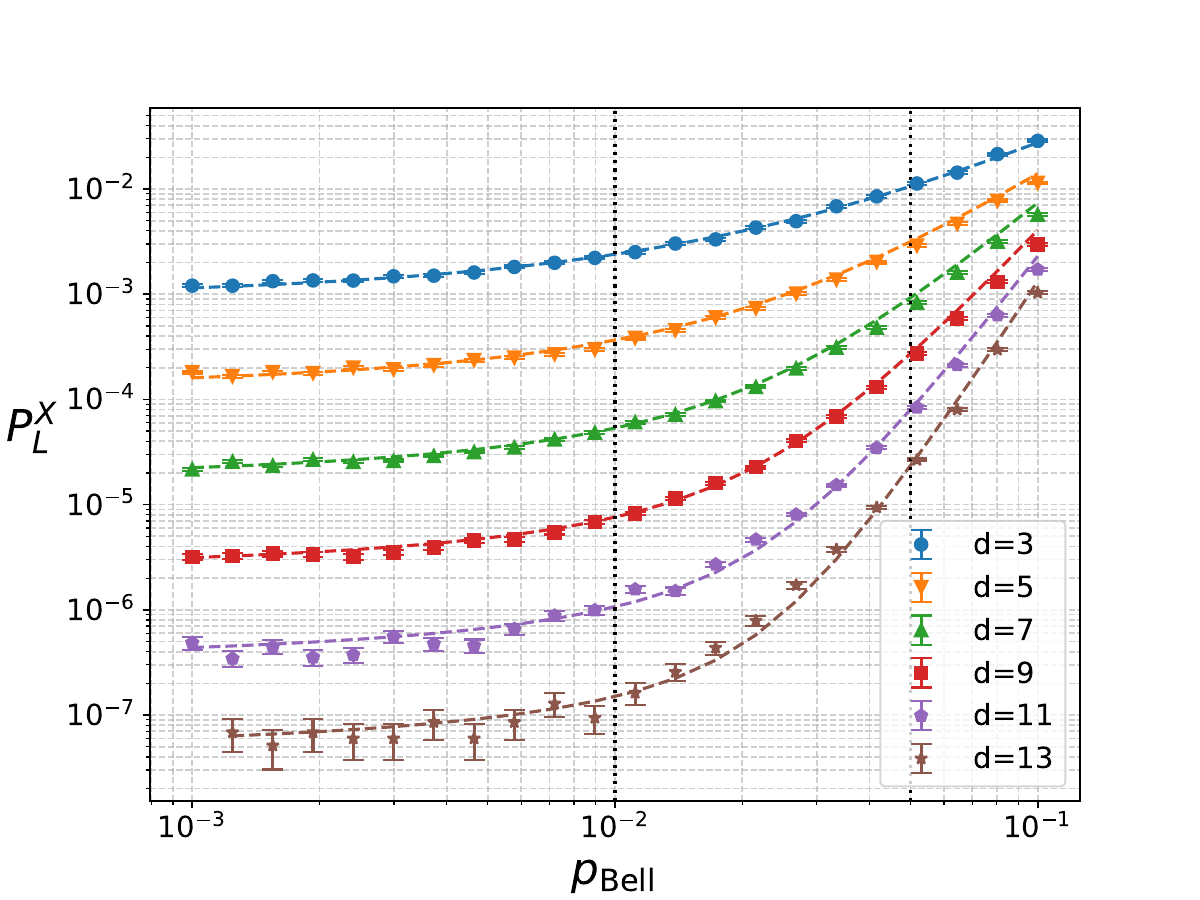}
    \caption{$X$-logical error rate per round $P_L^X$ as a function of the physical error rate on the Bell pairs $p_{\text{Bell}}$, under the assumption that every operation is subject to depolarizing noise with a rate  $p={10}^{-3}$.
    The results are presented for various code distances, as indicated in the inset.
    The dashed curves are the logical error model deduced from a collective fit of the ansatz of \autoref{eq:split log er rate model}.
    The two vertical dotted lines delimits the interval in which we use this model in the resource estimation.}\label{fig:p fixed slice}
\end{figure}

\smallskip
\noindent
\autoref{eq:split log er rate model} is used to reproduce the memory simulation results.
The fit results, covering a set of points in the $(p_{\text{Bell}}, p, P_L^X)$ space, lead to $\alpha_1=0.98\pm0.02$, $\alpha_2=0.045\pm0.002$, $\alpha_3=0.053\pm0.002$, $\alpha_c=0.21\pm0.02$, $p^*=(7.18\pm0.04)\times {10}^{-3}$ and $p_{\text{Bell}}^*=0.298\pm0.011$, see \autoref{annexe: fit methodology} for details of the fitting procedure.
\autoref{fig:p fixed slice} shows the result of this fit in the region of interest for our study corresponding to the slice with $p={10}^{-3}$.
Interestingly, the tolerance to Bell pair infidelity is remarkably high, with a threshold $p_{\text{Bell}}^*$ close to \SI{30}{\percent}.
This can be understood by focusing on errors arising during Bell pair preparation, that is considering the case where $p=0$.
As previously discussed, $X$-errors occurring on the Bell pairs propagate to the data qubits on the seam when the Bell pair is used to measure an $X$-stabilizer.
Similarly, they induce measurement errors when the Bell pair is used to measure a $Z$-stabilizer.
As a result, error correction in this scenario reduces to the task of a repetition code handling phenomenological noise associated with an error rate $8p_{\text{Bell}}/15$, see \autoref{Annexe noise Bell} for additional details.

 \medskip 
\noindent

\paragraph{Layout of a simple  distributed architecture---}\label{sec:layout}
The layout is designed for a fault-tolerant implementation of gates on the rotated surface code leveraging lattice surgery~\cite{Horsman_2012} and magic state injection.
Each processor stores logical qubits into patches of distance $d$, grouped in two columns and interleaved with routing qubits within each computing unit.
The processors are arranged in a one-dimensional chain, successively interconnected by $2d$ Bell pairs.
This arrangement ensures protection against $\lfloor (d-1)/2 \rfloor$ errors in a patch resulting from a lattice surgery merging of any logical qubits.
This layout is particularly simple, but executing multiple lattice surgery operations in parallel is hardly possible since only one routing patch at a time can be shared between the units at the two extremities of the chain, see \autoref{Annexe lattice surgery} for further details.
Additionally, we do not consider a dynamical layout where logical qubits are relocated during the algorithm execution, hence enhancing parallelization while reducing the total number of routing qubits.
As a result, our resource estimate do not intend to compete with those reported in Ref.~\cite{Gidney_2021}, which employ a dynamic space-time layout designed specifically for Shor subroutines.
Nevertheless, our approach provides valuable insights into the space and time overhead of the distributed setting in the presence of Bell pair errors.

 \medskip 
\noindent

\paragraph{Resource estimation to factor 2048-bit RSA integers---}
We here report on a comparative resource estimation to factor a 2048-bit RSA integer using Ekerå and Håstad~\cite{Ekera_2017} version of Shor's algorithm~\cite{Shor} with a monolithic and a distributed architecture.
We begin by decomposing Shor's algorithm into a sequence of subroutines outlined in~\cite{Gouzien_2021}.
The number of gates in each subroutine is deduced from their implementations using a Toffoli-based gate set. This provides the total number of logical qubits and gates required for factoring a 2048-bit RSA integer.
Using explicit constructions of CNOT, C$X...X$ (multiple NOT gates controlled by one qubit) and Toffoli gates at the logical level (see \autoref{Annexe lattice surgery} and \autoref{Annexe: Distillation}), we then estimate their error rates and the time it takes to execute them as a function of the code distance.
For each code distance, the algorithm execution time is deduced by dividing the sum of the implementation time of all the operations (gates, measurements, and initializations) by the overall success probability.
The space cost is obtained from the number of physical qubits required by the layout to host the required number of logical qubits, magic state factories and routing areas for a given number of computing units.
We choose to take the smallest possible processors while ensuring that a magic states factory would fit inside, which also fixes the number of processors in the architecture.
The time and space overheads are deduced by choosing the distance, the subroutines parameters (detailed in \autoref{Reading guide table}) and the magic state factory minimizing the volume.

\begin{table*}[t]
\centering
\begin{tabular}{|p{4.5cm}|p{2.2cm}|p{2.2cm}|p{2.3cm}|p{2.2cm}|p{2.3cm}|}
\hline   
& \multicolumn{1}{|c|}{monolithic} & \multicolumn{4}{|c|}{distributed}\\ 
\cline{2-6}
  & $p=\SI{0.1}{\percent}$ & 
  $p=\SI{0.1}{\percent}$ &
  $p=\SI{0.1}{\percent}$ &  
  $p=\SI{0.1}{\percent}$ &
  $p=\SI{0.1}{\percent}$ \\
                                   &                               &$p_{\text{Bell}}=\SI{0}{\percent}$&$p_{\text{Bell}}=\SI{2}{\percent}$&$p_{\text{Bell}}=\SI{3}{\percent}$& $p_{\text{Bell}}=\SI{4}{\percent}$ \\
\hline
\hline
distance $d$                       &35                               &35                              &35                              & 41                           & 49 \\
\hline
$\#$ logical qubits per processors & 8283                            &22                              &22                              & 16                           & 12 \\
\hline
$\#$ physical qubits per processor & $\num{32207233}$                & $\num{89781}$                  &$\num{89781}$                   &$\num{90885}$                 & $\num{99197}$ \\
\hline
$\#$ processors                    & 1                               & 377+2                          & 377+2                          & 518+2                        & 690+2  \\
\hline
total $\#$ of physical qubits      & $\num{32207233}$                & $\num{34018309}$               & $\num{34018309}$               &$\num{47239830}$              & $\num{68616802}$ \\
\hline
Estimated duration                 & 23 days 11 hours                & 23 days 15 hours               & 26 days 9 hour                & 29 days 5 hours              & 34 days 17 hours\\
\hline
Magic state factory                & $p_{\text{out}}= \num{3.0e-14}$ &$p_{\text{out}}= \num{2.1e-14}$ & $p_{\text{out}}= \num{2.1e-14}$& $p_{\text{out}}= \num{3e-14}$&$p_{\text{out}}= \num{2.1e-14}$ \\
\hline
Space overhead                     & N.A                             & reference                      & $\SI{0}{\percent}$             &$\SI{39}{\percent}$           &$\SI{102}{\percent}$ \\
\hline
Time overhead                      & N.A                             & reference                      & $\SI{12}{\percent}$            &$\SI{24}{\percent}$           &$\SI{47}{\percent}$ \\
\hline
\end{tabular}
\caption{Resource estimations table.
In the distributed setting, the number of processors is written as: $\#$ processors storing logical qubits for computation $+$ $\#$ processors storing a magic state factory.
The number of physical qubits per processor announced in the table is the one of those storing logical qubits.
For more information about the magic state factories, see \autoref{tab: factories}.
A reading guide of this table is available in \autoref{Reading guide table}.}\label{tab:resource}
\end{table*}

\medskip 
\noindent
For CNOT gates, we consider an approach with an auxiliary qubit and two joint measurements performed via lattice surgery.
Since each joint measurement takes $\sim d$ rounds, the CNOT gate duration via lattice surgery is a priori $2dt_c$~\cite{Litinski_2019,fowler2019lowoverheadquantumcomputation}, where $t_c$ is the cycle time, namely the duration of one syndrome extraction cycle, which we assume to be $\SI{1}{\micro\second}$.
However, for some specific logical qubit locations, it happens that accessing logical operators for the second joint measurement requires an additional $d$ cycle to move patches.
Therefore, we have made the conservative assumption of attributing a time cost of $4dt_c$ to the CNOT gate, see \autoref{Annexe lattice surgery}.
The probability of an error-free implementation of such a CNOT is evaluated by considering that neither the control, nor the target, nor the other idling data qubits should experience any errors over $4d$ rounds while the error rate per round is given by \autoref{eq:log error rate model}.
If a CNOT is implemented between qubits belonging to different computing units, an error-free implementation also imposes that no error occurs on the auxiliary-control or auxiliary-target merged patches during $d$ rounds while the error rate per round, of the form \autoref{eq:split log er rate model}, depends on the number of seams separating the two computing units.
We always consider the worst-case scenario in which the routing patch crosses the entire chain of computing units, see \autoref{Annexe lattice surgery} for the details on the noise model of CNOT gates and extension to C$X...X$.

\medskip 
\noindent
For the Toffoli gate, we use a two-level distillation factory combining protocols consuming 15 noisy $T$ states to produce one cleaner $T$ state, that is then used as input of a 8 $T$ to 1 CC$Z$ protocol.
The space cost of this factory is evaluated for different inner distances, see~\cite{Litinski_2019_magic} and  \autoref{annexe sec:Magic state factories}.
We use two factories, so that CC$Z$ states can be both injected and prepared in parallel as the injection time is longer than the duration of the magic state preparation.
This choice is further detailed in \autoref{annexe sec: Toffoli error model} along with the logical error model of the Toffoli gate, mainly based on the CNOT logical error model.

\medskip 
\noindent
The result of our resource estimation is presented in \autoref{tab:resource}, see also \autoref{Reading guide table}.
For an error rate $p={10}^{-3}$, our reference monolithic approach requires over 32 million qubits to factor a 2048-bit RSA integer, with a computation time exceeding 23 days.
Resource and runtime are essentially unchanged when the algorithm is executed in a distributed way without error on the Bell pairs.
The slight additional space cost is due to the presence of additional routing qubits in the distributed architecture, see \autoref{Appendix: layout}.
The optimal distance remains unchanged for up to $p_{\text{Bell}} = \SI{2}{\percent}$ corresponding to $\mathcal{F}\left(\rho, \ketbra{\Phi^+}\right) \geq \SI{98.4}{\percent}$.
However, since the time cost is defined as the algorithm's expected runtime toward its success, there is a slight overhead due to the increased failure probability of each logical CNOT gate in the circuit.
The increase of qubit cost for $p_{\text{Bell}}$ around $\SI{3}{\percent}$ is due to a change of distance, leading to each logical qubit requiring more physical qubits.

\paragraph{Conclusion---}

Through circuit level-simulations of memory experiments and given a minimalist architecture, we have identified network requirements that enable the execution of Shor's algorithm in a distributed platform using lattice surgery with little or no space-time overhead with respect to the monolithic approach.
Concretely, with a chain-like architecture made of 379 processors with \num{89781} qubits each, factoring 2048-RSA integers is shown possible in less than 24 days for a physical gate error rate of ${10}^{-3}$ if 
 $2d=70$ Bell pairs are available per cycle time (\SI{1}{\micro\second}) between each nearest-neighbor processor, with a fidelity exceeding \SI{98.4}{\percent}.
Our findings are pertinent for informing how networks' quality should progress to scale up quantum computing.

\paragraph{Note added---} While we were finalizing this manuscript, the proposed method for distributing rotated surface code patches using similar techniques was independently reported in Ref.~\cite{shalby2025optimizednoiseresilientsurfacecode}.

\begin{acknowledgments}
 Acknowledgments--- We thank Jérémie Guillaud, Christophe Vuillot and Diego Ruiz for insightful discussions.
This work was partially supported by the French National program Programme d’investissement d’avenir, IRT Nanoelec, with the reference ANR-10-AIRT-05.
\end{acknowledgments}

\bibliography{sample}
\clearpage
\appendix  \setcounter{secnumdepth}{2}

\section{Measurement teleportation}\label{Annexe:Teleported measurement}

State teleportation as well as gate teleportation now stands as standard tools in quantum computing and communication.
Here we introduce the concept of measurement teleportation.
The principle is to use a Bell pair and classical communication to directly perform a non-local measurement.
It is a particularly useful tool in the context of quantum error correction to distribute computation over several QPUs.
Indeed, it stands as a natural technique to measure the stabilizer of some error correcting code standing on several processors.
Note that the idea to use GHZ state for syndrome extraction is common in the literature~\cite{DiVincenzo_2007_GHZ_syndrome}.
Indeed, $n$-qubits GHZ state has been originally used for syndrome extraction of $n$-qubits stabilizers to ensure fault tolerance.
Similarly to this idea we propose to measure $n$-qubits stabilizers split into two subsystems with $2$-qubits GHZ state, namely Bell pairs.

In teleported measurement, a Bell pair is shared between two subsystems and used as a single ancilla system to extract the measurement result.
Without entanglement between the two qubits of the ancilla system, the measurement would reveal too much information about the system by effectively measuring smaller operators than intended.
Note that a non-local measurement can be achieved through state teleportation, by teleportation of an ancilla qubit, or gate teleportation, by teleportation of two-qubit gates.
However, using measurement teleportation give shorter circuits\footnote{For instance, the 4-qubit parity $ZZZZ$ implemented in \autoref{fig:teleported_measurement} takes two time steps instead of four when a single qubit ancilla is available.}.
\subsection{Small example on $ZZ$ measurement}
Consider a two-qubit $ZZ$ measurement circuit through a single ancilla:
\begin{center}
    \centering
    \scalebox{0.8}{
    \begin{quantikz}
        \lstick{$1$}      &        & \ctrl{2}&          \\
        \lstick{$2$}      &\ctrl{1}&         &          \\
        \lstick{$\ket{0}$}&\targ{} &\targ{}  &\meter{}
    \end{quantikz}
    }
\end{center}
If qubit $1$ is on a first subsystem and $2$ on another, measuring the operator with two independent ancillae would reveal not only $Z_{1}Z_2$ but also $Z_1$ and $Z_2$ which will accidentally over-project the qubits: 
\begin{center}
    \scalebox{0.8}{
    \begin{quantikz}
        \lstick[2]{\sc first subsystem}&\wireoverride{n}\lstick{$1$}  &\ctrl{1} & \\
        &\wireoverride{n}\lstick{$\ket{0}$}& \targ{}   &\meter{}\\
        \lstick[2]{\sc second subsystem}&\wireoverride{n}\lstick{$\ket{0}$}&\targ{}       & \meter{}\\
        &\wireoverride{n}\lstick{$2$}& \ctrl{-1}  & 
    \end{quantikz}
    }
\end{center}

The solution is to entangle the qubits of the ancilla system in a state $\ket{\Phi^+}=\frac{\ket{00}+\ket{11}}{\sqrt{2}}$ so that both measurement results can't be interpreted alone, but xoring the two outputs gives the measurement result: 
\begin{center}
    \scalebox{0.8}{
    \begin{quantikz}
        \lstick[2]{\sc first subsystem}&\wireoverride{n}\lstick{$1$}  &\ctrl{1} & \\
        &\wireoverride{n}\makeebit{}& \targ{}   &\meter{}&\targ{}\setwiretype{c}&\\
        \lstick[2]{\sc second subsystem}&\wireoverride{n}&\targ{}        & \meter{}&\cwbend{-1}\setwiretype{c}\\
        &\wireoverride{n}\lstick{$2$}& \ctrl{-1} & 
    \end{quantikz}
    }
\end{center}
Note that the bit of randomness from the Bell pair is used to mask both measurement outcomes, but simplifies out when xoring them.
The measurement teleportation circuit could also have be obtained by using gate teleportation for one CNOT and simplification of the circuit.

This scheme can be extended to any measurement of hermitian unitary operators.
\subsection{Measurement teleportation: General introduction}

Consider a bipartite system $A \otimes B$ on which you want to perform a projective measurement of a Hermitian unitary operator $O_A\otimes O_B$ (where $O_A$ and $O_B$ are both hermitian unitary operators) with minimal interaction.
$A$ and $B$ are respectively composed of $n_A$ and $n_B$ qubits.
Measurement teleportation provides a way to do such a measurement using a shared Bell pair.
We note $1$ and $2$ the qubits of this Bell pair such that $A'=A \otimes Q_1$ and $B'=B \otimes Q_2$, with $Q_1$ and $Q_2$ the Hilbert space of the qubits 1 and 2, are two separated systems requiring no quantum interaction between each other (once the Bell pair is formed).
\smallskip

The general scheme is the following:
\begin{center}
    \scalebox{0.8}{
    \begin{quantikz}
        \lstick[2]{\sc first subsystem}&\wireoverride{n}\lstick{$A$}  &\gate{O_A}& & \\
        &\wireoverride{n}\makeebit{}& \ctrl{-1}   &\gate{H}&\meter{}&\targ{}\setwiretype{c}&\\
        \lstick[2]{\sc second subsystem}&\wireoverride{n}&\ctrl{1}       & \gate{H}&\meter{}&\cwbend{-1}\setwiretype{c}\\
        &\wireoverride{n}\lstick{$B$}& \gate{O_B} & &
    \end{quantikz}
    }
\end{center}

Where the line of $A$ and $B$ should be understood as the set of lines with the qubits of each system.

\begin{proof}
    We note $\epsilon_A, \epsilon_B$ the eigenvalues of $O_A$ and $O_B$; the eigenvalue of $O_A \otimes O_B$ are thus $\epsilon_A\epsilon_B$.
    As $O_A$ and $O_B$ are hermitian unitary operators their eigenvalues are either $+1$ or $-1$ so are the eigenvalues of $O_A\otimes O_B$.
    On each system $A$ and $B$ we can build a basis of eigenstates of $O_A$ and $O_B$ that we note $\ket{\phi_{j,\epsilon_{j,i}}^i}$ with $j\in\{A,B\}$, $i\in [\![1,2^{n_j}]\!]$ (we recall that $j$ is made of $n_j$ qubits) and $\epsilon_{j,i}\in\{-1,+1\}$ the eigenvalue of the associated basis state with respect to the operator $O_j$.

    Given a state $\ket{\psi_{AB}}$ of the system $A \otimes B$, we can decompose this state in this basis as:
    \begin{align}\label{AB_full_decomp}
    \begin{split}
    \ket{\psi_{AB}}&=a_{+1,+1}\sum_{\substack{i,j \text{ s.t }\\\epsilon_{A,i}=+1\\\epsilon_{B,j}=+1}}c_{i,j}^{+1,+1}\ket{\phi_{A,\epsilon_{A,i}}^i}\ket{\phi_{B,\epsilon_{B,j}}^j}\\
    &+a_{-1,-1}\sum_{\substack{i,j \text{ s.t }\\\epsilon_{A,i}=-1\\\epsilon_{B,j}=-1}}c_{i,j}^{-1,-1}\ket{\phi_{A,\epsilon_{A,i}}^i}\ket{\phi_{B,\epsilon_{B,j}}^j}\\
    &+a_{-1,+1}\sum_{\substack{i,j \text{ s.t }\\\epsilon_{A,i}=-1\\\epsilon_{B,j}=+1}}c_{i,j}^{-1,+1}\ket{\phi_{A,\epsilon_{A,i}}^i}\ket{\phi_{B,\epsilon_{B,j}}^j}\\
    &+a_{+1,-1}\sum_{\substack{i,j \text{ s.t }\\\epsilon_{A,i}=+1\\\epsilon_{B,j}=-1}}c_{i,j}^{+1,-1}\ket{\phi_{A,\epsilon_{A,i}}^i}\ket{\phi_{B,\epsilon_{B,j}}^j}
    \end{split}
    \end{align}
    For simplicity for any $(k,l)\in\{-1,+1\}^2$ we will note \[\ket{\phi_{k,l}}=\sum_{\substack{i,j \text{ s.t }\\\epsilon_{A,i}=k\\\epsilon_{B,j}=l}}c_{i,j}^{k,l}\ket{\phi_{A,\epsilon_{A,i}}^i}\ket{\phi_{B,\epsilon_{B,j}}^j}\]
    
    The auxiliary system $Q_1 \otimes Q_2$ is in the Bell pair state $\ket{\psi_{12}}=\frac{\ket{00}+\ket{11}}{\sqrt{2}}$.
    At the begining of the circuit, the state of the full system $A'\otimes B'$ is:
    \begin{align*}
        \ket{\psi}=&\ket{\psi_{12}}\ket{\psi_{AB}}\\
        =&\frac{1}{\sqrt{2}}\left(\ket{00\psi_{AB}}+\ket{11\psi_{AB}}\right)\\
        =&\frac{a_{+1,+1}}{\sqrt{2}}\left(\ket{00\phi_{+1,+1}}+\ket{11\phi_{+1,+1}}\right)\\
        &+\frac{a_{-1,-1}}{\sqrt{2}}\left(\ket{00\phi_{-1,-1}}+\ket{11\phi_{-1,-1}}\right)\\
        &+\frac{a_{-1,+1}}{\sqrt{2}}\left(\ket{00\phi_{-1,+1}}
        +\ket{11\phi_{-1,+1}}\right)\\
        &+\frac{a_{+1,-1}}{\sqrt{2}}\left(\ket{00\phi_{+1,-1}}+\ket{11\phi_{+1,-1}}\right)
    \end{align*}

    After the gate (or sequence of gates) C$O_A$ and C$O_B$ we have:
    \begin{align*}
        \ket{\psi'}=&\text{C}O_A\text{C}O_B\ket{\psi_{12}}\ket{\psi_{AB}}\\
        =&\frac{a_{+1,+1}}{\sqrt{2}}\left(\ket{00\phi_{+1,+1}}+\ket{11\phi_{+1,+1}}\right)\\
        &+\frac{a_{-1,-1}}{\sqrt{2}}\left(\ket{00\phi_{-1,-1}}+\ket{11\phi_{-1,-1}}\right)\\
        &+\frac{a_{-1,+1}}{\sqrt{2}}\left(\ket{00\phi_{-1,+1}}-\ket{11\phi_{-1,+1}}\right)\\
        &+\frac{a_{+1,-1}}{\sqrt{2}}\left(\ket{00\phi_{+1,-1}}-\ket{11\phi_{+1,-1}}\right)\\
        =&\frac{a_{+1,+1}}{\sqrt{2}}\left(\ket{00}+\ket{11}\right)\ket{\phi_{+1,+1}}\\
        &+\frac{a_{-1,-1}}{\sqrt{2}}\left(\ket{00}+\ket{11}\right)\ket{\phi_{-1,-1}}\\
        &+\frac{a_{-1,+1}}{\sqrt{2}}\left(\ket{00}-\ket{11}\right)\ket{\phi_{-1,+1}}\\
        &+\frac{a_{+1,-1}}{\sqrt{2}}\left(\ket{00}-\ket{11}\right)\ket{\phi_{+1,-1}}
    \end{align*}
    After the Hadamard gates are applied on the Bell pair qubits the state is:
    \begin{align*}
        \ket{\psi''}=&\frac{\ket{++}+\ket{--}}{\sqrt{2}}\left(a_{+1,+1}\ket{\phi_{+1,+1}}+a_{-1,-1}\ket{\phi_{-1,-1}}\right)\\
        &+\frac{\ket{++}-\ket{--}}{\sqrt{2}}\left(a_{-1,+1}\ket{\phi_{-1,+1}}+a_{+1,-1}\ket{\phi_{+1,-1}}\right)\\
        =&\frac{\ket{00}+\ket{11}}{\sqrt{2}}\left(a_{+1,+1}\ket{\phi_{+1,+1}}+a_{-1,-1}\ket{\phi_{-1,-1}}\right)\\
        &+\frac{\ket{01}+\ket{10}}{\sqrt{2}}\left(a_{-1,+1}\ket{\phi_{-1,+1}}+a_{+1,-1}\ket{\phi_{+1,-1}}\right)
    \end{align*}
    By measuring individually in the $Z$ basis, one cannot deduce any information about the joint state from each individual measurement alone; however, the sum (modulo 2, i.e.\@ the XOR) of these measurements allows us to distinguish the states $\frac{\ket{00}+\ket{11}}{\sqrt{2}}$ and $\frac{\ket{01}+\ket{10}}{\sqrt{2}}$.

    If the XOR equals 0 ($1\oplus1$ or $0\oplus0$), the system $A\otimes B$ is projected onto $\ket{\psi_f} = a_{+1,+1} \ket{\phi_{+1,+1}} + a_{-1,-1} \ket{\phi_{-1,-1}}$.
    By using the decomposition of \autoref{AB_full_decomp} and applying the projector $\frac{\mathbb{I} + O_AO_B}{\sqrt{2}}$ on it, one can easily check that indeed $\ket{\psi_f} = \frac{\mathbb{I} + O_AO_B}{\sqrt{2}} \ket{\psi_{AB}}$.
    In the same way, one can easily check that if the XOR equals 1 then $\ket{\psi_{AB}}$ is projected onto $\ket{\psi_f} = \frac{\mathbb{I} - O_AO_B}{\sqrt{2}} \ket{\psi_{AB}}$.
\end{proof}

Note that the scheme would work with any two qubits maximally entangled state, namely $\mathbb{I}\otimes U \ket{\Phi^+}$ with $U$ a local unitary on $Q_2$, shared between the two subsystem by adding a correction $U^\dagger$ before the controlled gate on qubit $2$.

\subsection{Application to syndrome extraction}
Measurement teleportation can then be applied to measure Pauli operators involving arbitrary number of qubits.
For instance, to perform a 4-qubit $ZZZZ$ measurement, we can use the following: 
\begin{equation}\label{fig:teleported_measurement}
    \scalebox{0.8}{
    \begin{quantikz}
        \lstick[3]{\sc first subsystem}&\wireoverride{n}\lstick{$1$}  &\ctrl{2} & & & \\
        &\wireoverride{n}\lstick{$2$}&           &\ctrl{1}  &         & \\
        &\wireoverride{n}\makeebit{} & \targ{} &\targ{}  &\meter{}&\targ{}\setwiretype{c}&\\
        \lstick[3]{\sc second subsystem}&\wireoverride{n}&\targ{}    &\targ{}  & \meter{}&\cwbend{-1}\setwiretype{c}\\
        &\wireoverride{n}\lstick{$3$}&           & \ctrl{-1} &&  \\
        &\wireoverride{n}\lstick{$4$}&\ctrl{-2}    &         & & 
    \end{quantikz}
    }
\end{equation}

Those schemes can be particularly useful in the context of error-correction, for the stabilizer measurement.
The subtlety in syndrome extraction circuit based on teleported measurement is about error propagation.
Indeed, the fact that we use a Bell pair as an ancilla system modifies the way error propagates during syndrome extraction.
This behavior is code dependent because it is related to logical operators representatives.
In \autoref{fig:seam} we give a way to use measurement teleportation for surface code syndrome extraction while maintaining the full code distance (while using the scheme above would reduce the circuit-level distance).
 \section{Single column splitting}\label{naive seam}

In this section we show our first attempt to split a rotated surface code with teleported measurement on a single column of ancillae qubit.
In Fig.4 of~\cite{distributed_qec_cosmic_ray}, the authors propose to measure the stabilizer in an analogous way by using teleported cnot to prepare the Bell pairs ancillae.
However, this straightforward adaptation of lattice surgery doesn't preserve the code distance, requiring thus a more careful design as the one we propose in \autoref{fig:seam}.
The stabilizers and the teleported syndrome ancillae are shown in \autoref{fig:naive seam}.
The depolarizing noise model used in the circuit-level simulations is described in \autoref{annexe:noise model}.

\begin{figure}[h]
    \centering
    \includegraphics[width=0.5\linewidth]{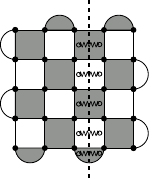}
    \caption{Rotated surface code with only one column of teleported syndrome measurement. Dashed line represent the separation between the two processors. As previously used, empty dots connected by a sinusoidal line correspond to ancillary qubits prepared in a Bell state. Ancillary qubits used for syndrome extraction on a single processor are not represented.}
    \label{fig:naive seam}
\end{figure}

It is clear from \autoref{fig:naive seam} that each qubit of the Bell pair has to do a CNOT with the two data qubits that are on their side of the split between processors during syndrome extraction, following the scheme presented in \autoref{fig:teleported_measurement}.
Therefore, hook errors will always affect two data qubits vertically aligned.
Only $\left\lceil \frac{1}{2}\left\lfloor \frac{d+1}{2}\right\rfloor \right\rceil$ $X$ errors during syndrome extraction are required to have more than $\left\lfloor \frac{d+1}{2}\right\rfloor$ $X$ errors on data qubits leading to a logical error.
This noise propagation scheme effectively affects the distance of the code that is now: 
\[d_{\text{eff}}=2\left\lceil \frac{1}{2}\left\lfloor \frac{d+1}{2}\right\rfloor \right\rceil -1\]

This hook error process can either come from Bell pair preparation $X$ errors (see \autoref{Annexe noise Bell}) or from a $X\otimes X$ error in the two qubits depolarizing channel of any of the first CNOT on each side of the Bell pair during syndrome extraction.

Simulation with only Bell state noisy preparation and perfect Clifford operation clearly exhibits this halved effective distance, see \autoref{fig:naive seam fit}.

\begin{figure}[h]
    \centering
    \includegraphics[width=0.9\linewidth]{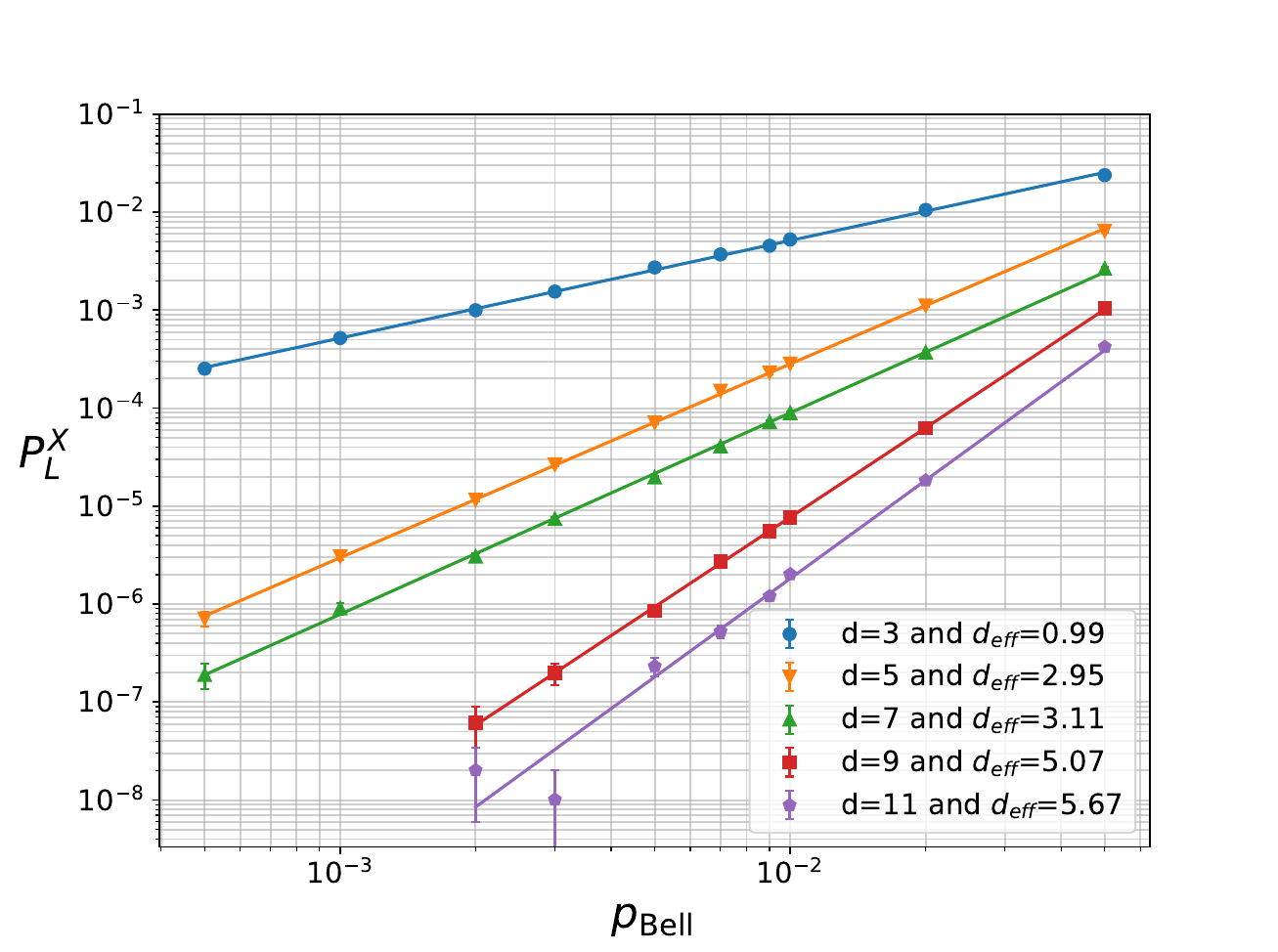}
    \caption{Circuit-level memory experiment simulation of \autoref{fig:naive seam} with only Bell state preparation noise.
    Effective distance $d_\text{eff}$ is obtained by a linear regression over the logarithm of the logical error rate, whose slope is $\frac{d_\text{eff}+1}{2}$.
    It shows a halved effective distance compared to the distance of the rotated surface code.}
    \label{fig:naive seam fit}
\end{figure}
 \section{Analytical bound on logical error rate}\label{annexe: ansatz}

The aim of this section is to derive an ansatz for the logical error rate in the distributed scenario. Specifically, we are interested in the effect of the physical error rate $p_{\text{Bell}}$ on the $X$ logical error rate (denoted $P^X_L$) because we consider here a seam\footnote{Note that $XX$ and $ZZ$ stabilize a Bell pair, which is why errors during the Bell pair preparation can always be considered as having affected the qubit of the pair that will interact with one data qubit (the one on the seam), and not the one interacting with 3 data qubits.} (see \autoref{fig:seam} for the definition of seam qubits) aligned with $X_L$, the logical $X$ operator.
$P^Z_L$ is only very slightly affected as any --minimal length-- representative of the $Z_L$ operator (i.e.\@ any horizontal line of data qubits in our stabilizer convention) will contain only one data qubit of the seam. 
Therefore, the key performance assessment criterion is a memory experiment of a logical state $\ket{0}_{L}$.

To achieve our aim, we adapt the bound derived in~\cite{ramette2023faulttolerantconnectionerrorcorrectedqubits} to the rotated surface code with measurement teleportation.
Concretely, changing the code changes the lattice over which we will count the possible logical error mechanisms.
As highlighted in~\cite{orourke2024comparepairrotatedvs}, rotated and unrotated surface codes have slightly different phenomenological matching graphs.
This feature will appear later in the computation and slightly modify the expression of the bound presented in~\cite{ramette2023faulttolerantconnectionerrorcorrectedqubits}.

\subsection{Simplified noise model and path counting analogy}\label{sec:Analytical bound}

We work in the simplified situation of a phenomenological noise model of parameter $p_{\text{B}}$ (B for ``bulk'') for both noise on data qubits and syndrome measurement.
This parameter aims at capturing the effect of the physical error rate $p$ from the circuit-level simulations.
The effect of Bell pair infidelity is modeled by an error rate $p_{\text{S}}$ (S for ``seam'') for syndromes that are measured by a Bell pair and for data qubits on the seam.
Note that in our circuit-level simulation, $p_\text{Bell}$ is applied only to the Bell pair, but here we study a phenomenological model where $p_\text{S}$ captures the effect of the propagation of $p_\text{Bell}$ through hook errors.

As a decoder, we consider a Minimum Weight Perfect Matching (MWPM) algorithm~\cite{higgott2021pymatchingpythonpackagedecoding} that uses the error matching graph. 
\begin{figure}[h!]
    \centering
    \includegraphics[width=0.4\textwidth]{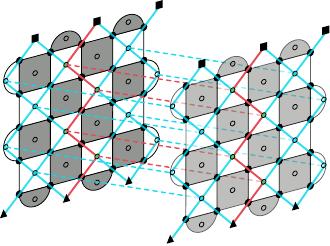}
    \caption{Distance 5 rotated surface code $X$ errors matching graph for phenomenological noise model.
    Surface code patches are only displayed for facilitating reader's understanding, the matching graph edges are only the colored ones.
    Dashed-edges are time-like errors (measurement errors). Red edges have error rates $p_{\text{S}}$ whereas blue edges have error rates $p_{\text{B}}$. Black dots are data qubits, while empty and green ones correspond to the stabilizer measurements (green ones are $Z$ checks involving a Bell pair). The triangles and squares represent the two $X$ boundaries.
    Note that red edges qubits are the one on the central column of the zig-zag QPU separation on \autoref{fig:seam}.
    }\label{fig:seam in 3d}
\end{figure}
\autoref{fig:seam in 3d} shows such a graph for the $X$ errors under a phenomenological noise.
The vertices of the matching graph are $Z$ checks.
A check is triggered if the associated stabilizer eigenvalue is different from its value at the previous round.
Specifically, an error on a data qubit triggers the checks associated to the stabilizers sharing this data qubit, that is checks connected by a continuous edge (blue or red depending on the noise probability $p_{\text{S}}$ or $p_{\text{B}}$).
An error on a stabilizer measurement triggers the same check at two successive rounds, that is checks connected by a dashed edge.
The two $X$ boundaries are represented on \autoref{fig:seam in 3d} by respectively squares and triangles.
The set of blue (resp.\@ red) edges is called $\{\text{Bulk}\}$ (resp.\@$\{\text{Seam}\}$).

For a given memory experiment, we define $\{E\}$ as the set of edges where $X$ errors happened, and $\{R\}$ as the set of edges flagged by the decoder as having caused the syndrome pattern; $\{R\}$ is latter referred to as the recovery operation.
Our protected state undergoes a logical $X$ error if the sum (symmetric difference) $\{E \oplus R\}=(\{E\}\backslash\{R\})\cup (\{R\}\backslash\{E\})$ of both operations is a non-trivial connected set of edges from one $X$ boundary to the other one.

For a set of edges $\{S\}$, we denote $\partial (\{S\})$ the boundaries of this set, that is 
the set of check vertices connected to an odd number of edges from $\{S\}$.
Note that omitting the brackets means the cardinality of: $S=\Card{\{S\}}$.

\smallskip
The construction of the bound is obtained from the following structure:
\begin{enumerate}
    \item Given a connected set of edges $\{\gamma\}$ on the matching graph represented in \autoref{fig:seam in 3d}, we compute the probability that a set of errors $\{E\}$ such that $\{E \oplus R\}$ contains $\{\gamma\}$ happens.
    We write this probability $\mathbb{P}\left( \{E \oplus R\}\supseteq \{\gamma\} \right)$.
    We call such $\{\gamma\}$ a ``path''. \item We parameterize and bound the number of paths that give a logical error.
    \item The final step consists in using these parameters to sum the probability computed in 1.\@ over all the paths leading to a logical error.
\end{enumerate}

\subsection{Probability of failure with a given path}
The derivation of this subsection has been originally done in~\cite{ramette2023faulttolerantconnectionerrorcorrectedqubits}.
We report it here for completeness.
Given a set of edges $\{E\}$ its associated probability is $\mathbb{P
}(\{E\})= \prod_{i}(1-p_i)\prod_{i\in \{E\}}\frac{p_i}{1-p_i}\propto\prod_{i\in \{E\}}\frac{p_i}{1-p_i}$
where the first product is over all edges of the graph, and $p_i$ is the error probability associated to the edge indexed by $i$.
From there one can define the weight of an edge $i$ as being $\text{wt}(\{i\})=\log\left(\frac{
1-pi}{p_i}\right)$.
The weight of a set of edges is just the sum of the edges weights.

MWPM decoder returns the recovery operation that minimizes the recovery weight $\text{wt}(\{R\})$ (i.e.\@ find the most probable recovery set), while ensuring $\partial(\{R\}) = \partial(\{E\})$.

We are interested in paths $\{\gamma\}$ that are closed, i.e.\@, paths such that $\partial(\{\gamma\})=\emptyset$ and non-trivial, i.e.\@, paths that link the two $X$ boundaries.
We note $\Lambda$ this set of paths.

We consider a path $\{\gamma\}\in \Lambda$ such that $\{\gamma\}\subset\{E \oplus R\}$.
Hence $\text{wt}(\{\gamma\})= \text{wt}(\{E\}\cap\{\gamma\})+\text{wt}(\{R\}\cap\{\gamma\})$.

On the other hand, $\text{wt}(\{R\}\cap\{\gamma\})<\text{wt}(\{E\}\cap\{\gamma\})$ because otherwise one could build a smaller weight recovery operation $\{R'\}=\left(\{R\}\backslash\{\gamma\}\right)\bigcup\left(\{E\}\cap\{\gamma\}\right)$ with $\partial(\{R'\}) = \partial(\{E\})$ because $\partial(\{E\}\cap\{\gamma\})=\partial(\{R\}\cap\{\gamma\})$ as $\{\gamma\}$ is a closed path.

Therefore, we have 
$\text{wt}(\{\gamma\})\leq 2\text{wt}(\{E\}\cap\{\gamma\})$ from which we directly deduce the following inequality (by simply rewriting the definition of the weights):
\begin{multline}\label{ineq:MWPM}
{\left(\frac{p_{\text{S}}}{1-p_{\text{S}}}\right)}^{\gamma_{\text{SE}}}
{\left(\frac{p_{\text{B}}}{1-p_{\text{B}}}\right)}^{\gamma_{\text{BE}}} \\
\leq
{\left(\frac{p_{\text{S}}}{1-p_{\text{S}}}\right)}^{\frac{\gamma_{\text{S}}}{2}}
{\left(\frac{p_{\text{B}}}{1-p_{\text{B}}}\right)}^{\frac{\gamma_{\text{B}}}{2}}
\end{multline}
where we used the notations $\{\gamma_\text{S}\}=\{\gamma\}\cap\{\text{Seam}\}$, $\{\gamma_\text{B}\}=\{\gamma\}\cap\{\text{Bulk}\}$, $\{\gamma_\text{SE}\}=\{E\}\cap\{\gamma\}\cap\{\text{Seam}\}$, $\{\gamma_\text{BE}\}=\{E\}\cap\{\gamma\}\cap\{\text{Bulk}\}$. Here $\{\text{Bulk}\}$ and $\{\text{Seam}\}$ refer to the bulk qubit and seam qubit ensembles.

\bigskip

Independently, given a path $\{\gamma\}\in\Lambda$, the probability that a set of errors $\{E\}$ generated on the matching graph has exactly $\gamma_\text{BE}, \gamma_\text{SE} $ overlaps with $\{\gamma_\text{B}\}$, $\{\gamma_\text{S}\}$ is:
\begin{multline*}
    \mathbb{P}(\gamma_\text{BE}, \gamma_\text{SE})=\binom{\gamma_\text{S}}{\gamma_\text{SE}}\binom{\gamma_\text{B}}{\gamma_\text{BE}} p_{\text{S}}^{\gamma_\text{SE}}{(1-p_{\text{S}})}^{\gamma_\text{S}-\gamma_\text{SE}}\\ p_{\text{B}}^{\gamma_\text{BE}}{(1-p_{\text{B}})}^{\gamma_\text{B}-\gamma_\text{BE}}
\end{multline*}
If one imposes that $\{\gamma\}\subset\{E \oplus R\}$, \autoref{ineq:MWPM} applies and we can bound  $\mathbb{P}\left( \{E \oplus R\}\supseteq \{\gamma\} \right)$ (for a given $\{\gamma\}$) using the expression of $\mathbb{P}(\gamma_\text{BE}, \gamma_\text{SE})$ by summing over all possible $\gamma_\text{BE}, \gamma_\text{SE}$:
\begin{multline*}
\mathbb{P}\left( \{E \oplus R\}\supseteq \{\gamma\} \right) = \sum_{\substack{\gamma_\text{SE},\gamma_\text{BE}\\\text{s.t.\@ }\{\gamma\}\subset\{E \oplus R\}}}
\mathbb{P}(\gamma_\text{BE}, \gamma_\text{SE})\\
\leq \sum_{\substack{\gamma_\text{SE},\gamma_\text{BE}\\\text{s.t.\@ }\{\gamma\}\subset\{E \oplus R\}}}
\binom{\gamma_\text{S}}{\gamma_\text{SE}}\binom{\gamma_\text{B}}{\gamma_\text{BE}}
p_{\text{S}}^{\gamma_\text{S}/2} p_{\text{B}}^{\gamma_\text{B}/2}\\
\leq \sum_{\substack{1\leq\gamma_\text{SE}\leq \gamma_{\text{S}}\\1\leq\gamma_\text{BE}\leq\gamma_{\text{B}}}}
\binom{\gamma_\text{S}}{\gamma_\text{SE}}\binom{\gamma_\text{B}}{\gamma_\text{BE}}
p_{\text{S}}^{\gamma_\text{S}/2} p_{\text{B}}^{\gamma_\text{B}/2}\\
\leq2^{\gamma_\text{S}+\gamma_\text{B}}p_{\text{S}}^{\gamma_\text{S}/2}p_{\text{B}}^{\gamma_\text{B}/2}.
\end{multline*}

The probability that a logical error happens is the sum of this probability over all possible non-trivial closed paths going from the bottom of the lattice to the top
\begin{multline}\label{pfail as sum over non trivial path}
P_{\text{fail}} = \sum_{\{\gamma\} \in \Lambda}\mathbb{P}\left( \{E \oplus R\}\supseteq \{\gamma\} \right) \\
\leq \sum_{\{\gamma\}\in \Lambda} 2^{\gamma_\text{S}+\gamma_\text{B}}p_{\text{S}}^{\gamma_\text{S}/2}p_{\text{B}}^{\gamma_\text{B}/2}.
\end{multline}

\subsection{Counting non-trivial closed paths}
We now count a larger type of path: the self-avoiding walks.
A self-avoiding walk (SAW) is a sequence of moves on a lattice that starts at a given point and proceeds to new points, with the constraint that the walk cannot visit the same point more than once.
\begin{lemma}
From any non-trivial closed path $\{\gamma\}$ one can always extract a self-avoiding-walk $\{\gamma_{\text{SAW}}\}$ joining the two opposite logical boundaries.
    We will note the set of SAW joining the two opposite logical boundaries as $\Lambda_{\text{SAW}}$.
\end{lemma}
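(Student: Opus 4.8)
The plan is to recast this purely combinatorial statement as an elementary fact about connectivity in the matching graph of \autoref{fig:seam in 3d}. First I would make the underlying graph explicit: the matching graph is an ordinary (multi)graph whose vertices are the $Z$ checks together with two distinguished ``boundary'' super-nodes $b_1$ and $b_2$, obtained by identifying respectively all the square and all the triangle markers of the two $X$ boundaries. The operator $\partial$ counts only the \emph{check} vertices that meet an odd number of edges of a given set, so the two boundary super-nodes are exempt from the parity constraint and act as free endpoints. With this convention, a non-trivial closed path $\{\gamma\}\in\Lambda$ is precisely an edge set in which every check vertex has even degree (this is $\partial(\{\gamma\})=\emptyset$, i.e.\@ ``closed'') and whose edge-induced subgraph has a connected component meeting both $b_1$ and $b_2$ (this is ``non-trivial'', i.e.\@ linking the two opposite boundaries).

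With this dictionary the extraction reduces to the standard fact that two vertices lying in the same connected component of a graph are joined by a simple path. Concretely, by non-triviality there is a connected component $C$ of the subgraph spanned by $\{\gamma\}$ that contains both $b_1$ and $b_2$; hence there exists at least one walk from $b_1$ to $b_2$ using only edges of $\{\gamma\}$. I would then take such a walk of minimal length. If it revisited some vertex $v$, the portion between the two visits to $v$ would be a closed loop that could be deleted to yield a strictly shorter walk with the same endpoints, contradicting minimality. The minimal walk is therefore vertex-disjoint, i.e.\@ a self-avoiding walk $\{\gamma_{\mathrm{SAW}}\}$; its edge set is contained in $\{\gamma\}$ by construction, and it joins the two opposite boundaries, so $\{\gamma_{\mathrm{SAW}}\}\in\Lambda_{\mathrm{SAW}}$.

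I do not expect a genuine obstacle here, only two points requiring care in the bookkeeping at the boundary super-nodes. One must check that the extracted walk connects the two \emph{opposite} boundaries rather than returning to the one it started from, and that contracting each boundary into a single node does not spuriously manufacture the required connectivity; both are handled automatically by fixing a walk whose endpoints are the two distinct super-nodes from the outset, since non-triviality of $\{\gamma\}$ already guarantees an honest boundary-to-boundary component before any contraction is invoked. Finally I would record the converse inclusion that motivates the whole construction: any self-avoiding walk from boundary to boundary has every internal check of degree two, hence $\partial=\emptyset$, and is non-trivial, so $\Lambda_{\mathrm{SAW}}\subseteq\Lambda$. The containment $\{\gamma_{\mathrm{SAW}}\}\subseteq\{\gamma\}$ is then the bridge that lets one replace the intractable sum over $\Lambda$ in \autoref{pfail as sum over non trivial path} by a sum over self-avoiding walks, whose number can be controlled by the lattice's connective constant.
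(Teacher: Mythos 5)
Your proposal is correct and uses essentially the same argument as the paper: pick a minimal-length walk joining the two logical boundaries inside $\{\gamma\}$ (which exists by connectedness/non-triviality), and show by loop-excision that minimality forces it to be self-avoiding. The extra bookkeeping you add (boundary super-nodes, parity conventions for $\partial$, and the converse inclusion $\Lambda_{\mathrm{SAW}}\subseteq\Lambda$) is sound but not needed beyond what the paper's own proof records.
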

\begin{proof}
    Given a non-trivial closed path $\{\gamma\}$, $\{\gamma\}$ is a connected set so it is always possible to extract a path that joins the two logical boundaries from it.
    Let's call $\{\gamma_{\text{SAW}}\}$ the minimal length path linking the two logical boundaries that one can extract from $\{\gamma\}$.
    Then $\{\gamma_{\text{SAW}}\}$ is necessarily a self-avoiding-walk.
    Indeed the opposite would mean that at some point $\{\gamma_{\text{SAW}}\}$ visits the same point two times which means that $\{\gamma_{\text{SAW}}\}$ has a trivial close loop that we can call $\{\gamma_{\text{loop}}\}$.
    If that is the case, $\{\gamma_{\text{SAW}}\}\backslash\{\gamma_{\text{loop}}\}$ is a shorter path, that can be extracted from $\{\gamma\}$, linking the two logical boundaries which contradicts the definition of $\{\gamma_{\text{SAW}}\}$.
\end{proof}

We have the following event inclusion: 
For any $\{\gamma\}\in\Lambda$ and $\{\gamma_{\text{SAW}}\}\in\Lambda_{\text{SAW}}$ extracted from it,
\[\left( \{E \oplus R\}\supseteq \{\gamma\} \right) \subset\left( \{E \oplus R\}\supseteq \{\gamma_{\text{SAW}}\} \right). 
 \]
Therefore, in \autoref{pfail as sum over non trivial path} it is sufficient to sum over the $\{\gamma_{\text{SAW}}\}$ to have a bound on the logical error probability: 

\begin{multline}\label{pfail as sum over SAW}
P_{\text{fail}} = \sum_{\{\gamma\}\in \Lambda}\mathbb{P}\left( \{E \oplus R\}\supseteq \{\gamma\} \right) \\
\leq\sum_{\{\gamma\}\in \Lambda_{\text{SAW}}}\mathbb{P}\left( \{E \oplus R\}\supseteq \{\gamma\} \right) \\
\leq \sum_{\{\gamma\}\in \Lambda_{\text{SAW}}} 2^{\gamma_\text{S}+\gamma_\text{B}}p_{\text{S}}^{\gamma_\text{S}/2}p_{\text{B}}^{\gamma_\text{B}/2}.
\end{multline}
On the other hand,
\begin{equation*}
\Lambda_{\text{SAW}}\subseteq\bigcup_{l\geq d}\text{SAW}(l)
\end{equation*}
where $\text{SAW}(l)$ is the set of self-avoiding walks of length $l$.
For a self-avoiding walk of length $l\geq d$ in a $45°$ tilted square lattice of dimension $D$, $\abs{\text{SAW}(l)} \leq 2D{(2D-1)}^{l-1}$ because at each step, the walk has at most $2D-1$ possibilities to continue.
This bound disregards lattice border effects and the inability to reuse an edge taken by earlier steps, except for the most recent one.

In our situation, as the syndrome measurements are noisy, we are repeating the error correcting cycle $\mathcal{O}(d)$ times with $d$ being the distance.
Therefore, as explicit on \autoref{fig:seam in 3d}, the set of errors in the bulk and in the seam are respectively living in spaces of dimension $D_\text{B}=(2+1)=3$ and $D_\text{S}=(1+1)=2$.

Typically, as $p_{\text{S}}>p_{\text{B}}$, the path will be likely to stay on the seam.
However, in the regime where $p_{\text{B}}$ is not negligible compared to $p_{\text{S}}$, we have to take into account situations where the path leaves the seam for a number of steps or even until reaching the boundary.

\begin{figure}
    \centering
    \includegraphics[width=0.5\linewidth]{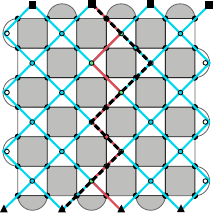}
    \caption{Matching graph slice at fixed time with a path $\{\gamma\}\in\Lambda $ represented as a black dashed line. For this path $\gamma_\text{B}=3$, $\gamma_\text{S}=4$, $C=1$, $l_0=0$, $l_1=2$, $l_2=1$ (definitions in the text).
    Note that here $\{\gamma\}$ is space-like, in general it can also take time-like edges represented as dashed edges on \autoref{fig:seam in 3d}.}\label{fig:matching graph 2d}
\end{figure}

We will parameterize the set of self-avoiding walks so that we can keep track of $\gamma_\text{B}$ and $\gamma_\text{S}$.
We will call $C$ the number of full excursions (hopping off the seam and then hopping back on the seam later).
Excursions that end on a border of the patch (squares and triangles in \autoref{fig:matching graph 2d}) are not counted in $C$.
Given a number of seam edges $\gamma_\text{S}$, there are $\binom{\gamma_\text{S}-1}{C}$ ways to choose the locations of $C$ excursions (a location being here the index of the seam edge after which the excursion starts, and not a coordinate on the matching graph).
Each excursion's bulk segment length is denoted $l_k$, $k$ being the index of the excursion.
$l_0$ and $l_{C+1}$ are the -- eventually null -- length of the bulk segments before joining the seam and after leaving it for the last time.
Then we have $\gamma_\text{B}=\sum_{k=0}^{C+1}l_k$.

For each of the $\gamma_\text{S}$ steps on the seam, if the previous step was on the seam we have $\mu_\text{S}=2D_\text{S}-1$ possibilities; otherwise we have $2D_\text{S}$ options.
For each of the steps in the $\{l_k\}$ we have less than $\mu_\text{B}=2D_\text{B}-1$ as well except for the first edge when leaving the seam that has $2(D_\text{B}-D_\text{S})$ ways to do so.
Hence, the number of self-avoiding walks starting from a given point, with $C$ excursions of lengths $\{l_k\}$ and $\gamma_\text{S}$ edges on the seam is bounded as
\begin{equation*}
n_{\text{SAW}}\left(C,\{l_k\},\gamma_\text{S}\right) \leq \binom{\gamma_\text{S}-1}{C} a^C \mu_\text{B}^{\sum_k l_k}\mu_\text{S}^{\gamma_\text{S}}
\end{equation*}
with $a=\frac{4(D_\text{B}-D_\text{S})D_\text{S}}{\mu_\text{S} \mu_\text{B}}=\frac{1}{2}$.

The probability of generating $\{E \oplus R\}$ with $\{\gamma\}$ included in it given the fact that $\{\gamma\}$ has parameters $C, \{l_k\}$, $\gamma_{\text{S}}$ is
\begin{multline*}
\mathbb{P}_{\text{SAW}}(C,\gamma_\text{S},\{l_k\}) \\
\leq N n_{\text{SAW}}\left(C,\{l_k\},\gamma_\text{S}\right) 2^{\gamma_\text{S}+\gamma_\text{B}}p_{\text{S}}^{\gamma_\text{S}/2}p_{\text{B}}^{\gamma_\text{B}/2} 
\end{multline*}
with $N$ being the number of possible starting points for the SAW.
Anyway, we can say that $N\leq \text{poly}(d)$ as there are only a polynomial number of points on the spacetime lattice.

Now we must sum over all possible parametrizations:
\begin{equation}\label{difference}
P_{\text{fail}} \leq \sum_{\gamma_\text{S}\geq 0}\sum_{C=0}^{\gamma_\text{S}-1}\sum_{\substack{\{l_k\geq 0\} \\ \gamma_\text{S}+\sum_{k=0}^{C+1}l_k\geq d}}\mathbb{P}_{\text{SAW}}(C,\gamma_\text{S},\{l_k\})
\end{equation}
where the notation $\{l_k\geq 0\}$ indicates that the sum is over all the possible excursion paths $\{l_k\}$ with non-negative cardinal $l_k$ for each $k$.

Using the previous bound when separating the cases $\gamma_\text{S}=0$ and $\gamma_\text{B}=0$ (but keeping the case $C=0, \gamma_\text{S}>0 \text{ with } l_0>0 \text{ or } l_1>0$ in the sum corresponding to situations with no excursion but starting outside of the seam and joining the seam or leaving the seam and finishing in the bulk) gives us:
\begin{multline}\label{annexe: ansatz, intermediate upper bond}
    P_{\text{fail}}/\text{poly}(d) \leq \sum_{\gamma_\text{S}\geq d} {\left(\frac{p_{\text{S}}}{p_{\text{S}}^*}\right)}^{\gamma_\text{S}/2}
    +
    \sum_{\gamma_\text{B}\geq d} {\left(\frac{p_{\text{B}}}{p_{\text{B}}^*}\right)}^{\gamma_\text{B}/2}\\
    +
    \sum_{\gamma_\text{S}\geq1}{\left(\frac{p_{\text{S}}}{p_{\text{S}}^*}\right)}^{\gamma_\text{S}/2}\sum_{C=0}^{\gamma_\text{S}-1}\binom{\gamma_\text{S}-1}{C}a^C\\
    \sum_{\substack{\{\forall k| 1\leq k \leq C,~l_k\geq2\}\\ \{l_0,~l_{C+1}~\geq 0\}\\ \gamma_\text{S}+\sum_{k=0}^{C+1}l_k\geq d}}{\left(\frac{p_{\text{B}}}
    {p_{\text{B}}^*}\right)}^{\frac{1}{2}\sum_k l_k}
\end{multline}
where $p_{\text{S}}^* = {(2\mu_\text{S})}^{-2}$, $p_{\text{B}}^* = {(2\mu_\text{B})}^{-2}$.

Now we can rewrite the last term as: 
\begin{multline}\label{annexe: ansatz last term developped}
\sum_{\substack{\{\forall k| 1\leq k \leq C,~l_k\geq2\}\\ \{l_0,~l_{C+1}~\geq0\}\\ \gamma_\text{S}+\sum_{k=0}^{C+1}l_k\geq d}}{\left(\frac{p_{\text{B}}}{p_{\text{B}}^*}\right)}^{\frac{1}{2}\sum_k l_k} \\
=\sum_{\substack{\{\forall k| 0\leq k \leq C+1,~l_k'\geq0\}\\  \gamma_\text{S}+2C+\sum_{k=0}^{C+1}l_k'\geq d}} {\left(\frac{p_{\text{B}}}{p_{\text{B}}^*}\right)}^{\frac{1}{2}\sum_k l_k'+C}
\\
= {\left(\frac{p_{\text{B}}}{p_{\text{B}}^*}\right)}^{\frac{d-\gamma_\text{S}-2C}{2}\bigr\rvert_{>0}} \sum_{\{\forall k| 0\leq k \leq C+1,~l_k^*\geq0\}} {\left(\frac{p_{\text{B}}}{p_{\text{B}}^*}\right)}^{\frac{1}{2}\sum_k l_k^*+C}
\\
= {\left(\frac{p_{\text{B}}}{p_{\text{B}}^*}\right)}^{\frac{d-\gamma_\text{S}-2C}{2}\bigr\rvert_{>0}} {\left(\frac{p_{\text{B}}}{p_{\text{B}}^*}\right)}^C \sum_{\{l_k^*\geq0\}} {\left(\frac{p_{\text{B}}}{p_{\text{B}}^*}\right)}^{\frac{1}{2}\sum_k l_k^*}.
\end{multline}
First equality comes from the change of variable $l_k=2+l_k'$ for $k \in \doublebrackets{1, C}$, and $l_k = l'_k$ for $k=0$ and $k=C+1$.
For the second equality, note that the condition $\gamma_\text{S}+2C+\sum_{k=0}^{C+1}l_k'\geq d$ is only restrictive when $d - \gamma_\text{S} - 2C \geq 0$.
Let us introduce a second change of variable: $l_k'=d_k+l_k^*$ for $k \in \doublebrackets{0, C+1}$ with $l_k$ and $d_k$ arbitrary non-negative integers, and $d_k$ such that $\sum_k d_k={d-\gamma_\text{S}-2C}\bigr\rvert_{>0}$ where $x\bigr\rvert_{>0} =
\begin{dcases*}
    x & if $x>0$ \\
    0 & else 
\end{dcases*}$.
Using this change of variable gives the second equality.

For simplicity, we will write $l_k $ instead of $l_k^*$ from now on.
One can consider that the sum over each $l_k$ runs towards infinity; in reality, it is limited by the available space on the lattice, but very high orders will be negligible as $\left(\frac{p_{\text{B}}}{p_{\text{B}}^*}\right)<1$ in the regime of interest.
We can also compute:
\begin{align*}
\sum_{\{l_k\geq0\}} {\left( \frac{p_{\text{B}}}{p_{\text{B}}^*}\right)}^{\frac{1}{2}\sum_k l_k}
&= \sum_{\{l_k\geq0\}} \left[ \prod_{k=0}^{C+1} {\left( \frac{p_{\text{B}}}{p_{\text{B}}^*}\right)}^{\frac{1}{2} l_k} \right] \\
&= \prod_{k=0}^{C+1} \left[ \sum_{l_k\geq0} {\left( \frac{p_{\text{B}}}{p_{\text{B}}^*}\right)}^{\frac{1}{2} l_k} \right] \\
&= \prod_{k = 0}^{C+1} {\left(\frac{1}{1-\sqrt{\frac{p_{\text{B}}}{p_{\text{B}}^*}}}\right)}\\
&= {\left(\frac{1}{1-\sqrt{\frac{p_{\text{B}}}{p_{\text{B}}^*}}}\right)}^{C+2}
\end{align*}
Note that we go from the first to the second equality by distributing the product, as in~\cite[Eq.\,(17)]{ramette2023faulttolerantconnectionerrorcorrectedqubits}).

Now by injecting this relation in \autoref{annexe: ansatz last term developped}, the third term in the upper bound \autoref{annexe: ansatz, intermediate upper bond} of $P_{\text{fail}}/\text{poly}(d)$ is bounded by:
\begin{multline}\label{annexe: ansatz developpement inegalite}
    \sum_{\gamma_\text{S}\geq1} {\left(\frac{p_{\text{S}}}{p_{\text{S}}^*}\right)}^{\gamma_\text{S}/2}
    \sum_{C=0}^{\gamma_\text{S}-1}\binom{\gamma_\text{S}-1}{C}a^C \\
    \sum_{\substack{\{\forall k| 1\leq k \leq C,~l_k\geq2\}\\
    \{l_0,~l_{C+1}~\geq0\}\\
    \gamma_\text{S}+\sum_{k=0}^{C+1}l_k\geq d}} 
    {\left(\frac{p_{\text{B}}}{p_{\text{B}}^*}\right)}^{\frac{1}{2}\sum_k l_k}\\
    \leq
    \sum_{\gamma_\text{S}\geq1} {\left(\frac{p_{\text{S}}}{p_{\text{S}}^*}\right)}^{\gamma_\text{S}/2}
    \sum_{C=0}^{\gamma_\text{S}-1}\binom{\gamma_\text{S}-1}{C}a^C 
    {\left(\frac{p_{\text{B}}}{p_{\text{B}}^*}\right)}^{\frac{d-\gamma_\text{S}-2C}{2}\bigr\rvert_{>0}}  {\left(\frac{p_{\text{B}}}{p_{\text{B}}^*}\right)}^C \\
  {\left(\frac{1}{1-\sqrt{\frac{p_{\text{B}}}{p_{\text{B}}^*}}}\right)}^{C+2}
    \\
    \leq
     \sum_{\gamma_\text{S}\geq 1} {\left(\frac{p_{\text{S}}}{p_{\text{S}}^*}\right)}^{\gamma_\text{S}/2}
     {\left(\frac{p_{\text{B}}}{p_{\text{B}}^*}\right)}^{\frac{d-\gamma_\text{S}}{2}}
     \sum_{C=0}^{\gamma_\text{S}-1}\binom{\gamma_\text{S}-1}{C}a^C 
      \\ {\left(\frac{1}{1-\sqrt{\frac{p_{\text{B}}}{p_{\text{B}}^*}}}\right)}^{C+2}
\end{multline}
where to get rid of the positive part in the exponent of ${\left(\frac{p_{\text{B}}}{p_{\text{B}}^*}\right)}^{\frac{d-\gamma_\text{S}-2C}{2}\bigr\rvert_{>0}}$ we used the fact that:
\begin{multline*}
\forall C>\frac{d-\gamma_\text{S}}{2},\text{ if } p_{\text{B}}<p_{\text{B}}^* \text{ then,}\\  1 = {\left(\frac{p_{\text{B}}}{p_{\text{B}}^*}\right)}^{\frac{d-\gamma_\text{S}-2C}{2}\bigr\rvert_{>0}} < {\left(\frac{p_{\text{B}}}{p_{\text{B}}^*}\right)}^{\frac{d-\gamma_\text{S}-2C}{2}}.
\end{multline*}

Using the binomial theorem one can rewrite the inner sum in the last inequality of \autoref{annexe: ansatz developpement inegalite} as,
\begin{equation*}
\sum_{C=0}^{\gamma_\text{S}-1}\binom{\gamma_\text{S}-1}{C}a^C
{\left(\frac{1}{1-\sqrt{\frac{p_{\text{B}}}{p_{\text{B}}^*}}}\right)}^{C}
= {\left(1+\frac{a}{1-\sqrt{\frac{p_{\text{B}}}{p_{\text{B}}^*}}}\right)}^{\gamma_\text{S}-1}.
\end{equation*}

By writing a ``pseudo-threshold'' for the seam as: 
\begin{equation}\label{new threshold}
    p_{\text{S}}^{**}= p_{\text{S}}^* {\left(1+\frac{a}{1-\sqrt{\frac{p_{\text{B}}}{p_{\text{B}}^*}}}\right)}^{-2}
\end{equation}
 and denoting $\beta(p_{\text{B}})= {\left(\frac{1}{1-\sqrt{\frac{p_{\text{B}}}{p_{\text{B}}^*}}}\right)}^{2} {\left(1+\frac{a}{1-\sqrt{\frac{p_{\text{B}}}{p_{\text{B}}^*}}}\right)}^{-1}$, we can then 
 rewrite \autoref{annexe: ansatz developpement inegalite} as: 
 \begin{multline*}
    \sum_{\gamma_\text{S}\geq1} {\left(\frac{p_{\text{S}}}{p_{\text{S}}^*}\right)}^{\gamma_\text{S}/2}
    \sum_{C=0}^{\gamma_\text{S}-1}\binom{\gamma_\text{S}-1}{C}a^C \\
    \sum_{\substack{\{\forall k| 1\leq k \leq C,~l_k\geq2\}\\
    \{l_0,~l_{C+1}~\geq0\}\\
    \gamma_\text{S}+\sum_{k=0}^{C+1}l_k\geq d}} 
    {\left(\frac{p_{\text{B}}}{p_{\text{B}}^*}\right)}^{\frac{1}{2}\sum_k l_k}\\
    \leq
     \sum_{\gamma_\text{S}\geq 1} {\left(\frac{p_{\text{S}}}{p_{\text{S}}^*}\right)}^{\gamma_\text{S}/2} {\left(\frac{p_{\text{B}}}{p_{\text{B}}^*}\right)}^{\frac{d-\gamma_\text{S}}{2}}
     {\left(1+\frac{a}{1-\sqrt{\frac{p_{\text{B}}}{p_{\text{B}}^*}}}\right)}^{\gamma_\text{S}-1}
     \\ {\left(\frac{1}{1-\sqrt{\frac{p_{\text{B}}}{p_{\text{B}}^*}}}\right)}^{2}\\
     \leq
     \beta(p_{\text{B}}) \sum_{\gamma_\text{S}\geq1} {\left(\frac{p_{\text{S}}}{p_{\text{S}}^{**}}\right)}^{\gamma_\text{S}/2} {\left(\frac{p_{\text{B}}}{p_{\text{B}}^*}\right)}^{\frac{d-\gamma_\text{S}}{2}}.
\end{multline*}

Finally, the full bound \autoref{annexe: ansatz, intermediate upper bond} can be expressed in the following compact form:
\begin{multline}\label{eq:analytical bound}
P_{\text{fail}}/\text{poly}(d)\leq \sum_{\gamma_\text{S}\geq d} {\left(\frac{p_{\text{S}}}{p_{\text{S}}^*}\right)}^{\gamma_\text{S}/2} + \sum_{\gamma_\text{B}\geq d} {\left(\frac{p_{\text{B}}}{p_{\text{B}}^*}\right)}^{\gamma_\text{B}/2} \\
+ \beta(p_{\text{B}}) \sum_{\gamma_\text{S}\geq1} {\left(\frac{p_{\text{S}}}{p_{\text{S}}^{**}}\right)}^{\gamma_\text{S}/2} {\left(\frac{p_{\text{B}}}{p_{\text{B}}^*}\right)}^{\frac{d-\gamma_\text{S}}{2}}. 
\end{multline}

\section{From a bound to an ansatz}
In the previous appendix we derived a bound on the logical error rate.
The form of the r.h.s.\@ expression of \autoref{eq:analytical bound} is based on the fact that the bound we used on the number of SAW scales as some constant to the power the length of the walk.
In practice,  $|\text{SAW}(l)|$ is asymptotically equivalent to an expression with such exponential scaling in $l$, which motivates the use of an ansatz of the same form as the bound we derived.

To simplify the expression, we will keep only the lowest order in $\frac{p_{\text{B}}}{p_{\text{B}}^*}$ and $\frac{p_{\text{S}}}{p_{\text{S}}^*}$ (considered to be of similar order): ${\left(\frac{p_{\text{B}}}{p_{\text{B}}^*}\right)}^{d/2}$ and ${\left(\frac{p_{\text{S}}}{p_{\text{S}}^*}\right)}^{d/2}$.
Also using the fact that each sum contains at most $\bigO{d^2}$ terms and that $\beta(p_{\text{B}}) = {(1+a)}^{-1} + \bigO{\sqrt{\frac{p_{\text{B}}}{p_{\text{B}}}^*}}$, we have:
\begin{multline*}
 \sum_{\gamma_\text{S}\geq d} {\left(\frac{p_{\text{S}}}{p_{\text{S}}^*}\right)}^{\gamma_\text{S}/2} + \sum_{\gamma_\text{B}\geq d} {\left(\frac{p_{\text{B}}}{p_{\text{B}}^*}\right)}^{\gamma_\text{B}/2} \\
+ \beta(p_{\text{B}}) \sum_{\gamma_\text{S}\geq1} {\left(\frac{p_{\text{S}}}{p_{\text{S}}^{**}}\right)}^{\gamma_\text{S}/2} {\left( \frac{p_{\text{B}}}{p_{\text{B}}^*}\right)}^{\frac{d-\gamma_\text{S}}{2}\bigr\rvert_{>0}}\\
\approx {\left(\frac{p_{\text{S}}}{p_{\text{S}}^*}\right)}^{d/2} +  {\left(\frac{p_{\text{B}}}{p_{\text{B}}^*}\right)}^{d/2} \\ + {(1+a)}^{-1}\sum_{1\leq\gamma_\text{S}\leq d} {\left(\frac{p_{\text{S}}}{p_{\text{S}}^{**}}\right)}^{\gamma_\text{S}/2} {\left( \frac{p_{\text{B}}}{p_{\text{B}}^*}\right)}^{\frac{d-\gamma_\text{S}}{2}}
\end{multline*}
where we have treated $p_{\text{S}}^{**}$ as if it was a constant.

To transform this bound into an ansatz, we take the possible polynomial factor in front of each term as some free real parameter.
This choice has been first motivated by the success of accordance between data and fit of the widely used ansatz $P_L = \alpha {\left(\frac{p}{p_{\text{th}}}\right)}^{\frac{d+1}{2}}$ (with $p$ the physical noise) in surface code memory experiments.
Indeed, the derivation done above is just an adaptation of the one done for a regular surface code in~\cite{Dennis_2002}.
We also replace $p_{\text{S}}$, $p_{\text{B}}$ by $p_{\text{Bulk}}$, $p$ which are the relevant noise strengths in our circuit-level simulations.
In the same spirit, we transform every combinatorial constant into a fit parameter, namely $\alpha_c$, $p^*$, $p_{\text{Bell}}^*$. Going to circuit-level noise model modifies the structure of the matching graph by adding extra diagonal edges (see \autoref{fig:matching graph}), and hence modifies the combinatorial constants.

In the end, the ansatz for the $X$ logical error rate per round is:
\begin{multline}\label{eq:anzatz}
     P_{L}^X \approx
    \alpha_1 {\left(\frac{p_{\text{Bell}}}{p_{\text{Bell}}^*}\right)}^{\frac{d+1}{2}} +  \alpha_2 {\left(\frac{p}{p^*}\right)}^{\frac{d+1}{2}} \\ 
    + \alpha_3 \sum_{1\leq i\leq d} {\left(\frac{p_{\text{Bell}}}{p_{\text{Bell}}^{*}}{\left(1+\frac{\alpha_c}{1-\sqrt{p/p^*}}\right)}^2\right)}^{\frac{i}{2}} {\left( \frac{p}{p^*}\right)}^{\frac{d+1-i}{2}}
\end{multline}
where the free fit parameters are $\alpha_1$, $\alpha_2$, $\alpha_3$, $\alpha_c$, $p^*$ and $p_{\text{Bell}}^*$.
In the same way as \autoref{new threshold}, we can define 
\begin{equation}\label{new threshold circuit level}
    p_{\text{Bell}}^{**}= p_{\text{Bell}}^* {\left(1+\frac{a}{1-\sqrt{\frac{p}{p^*}}}\right)}^{-2}
\end{equation}.

\begin{figure}
    \centering
    \includegraphics[width=\linewidth]{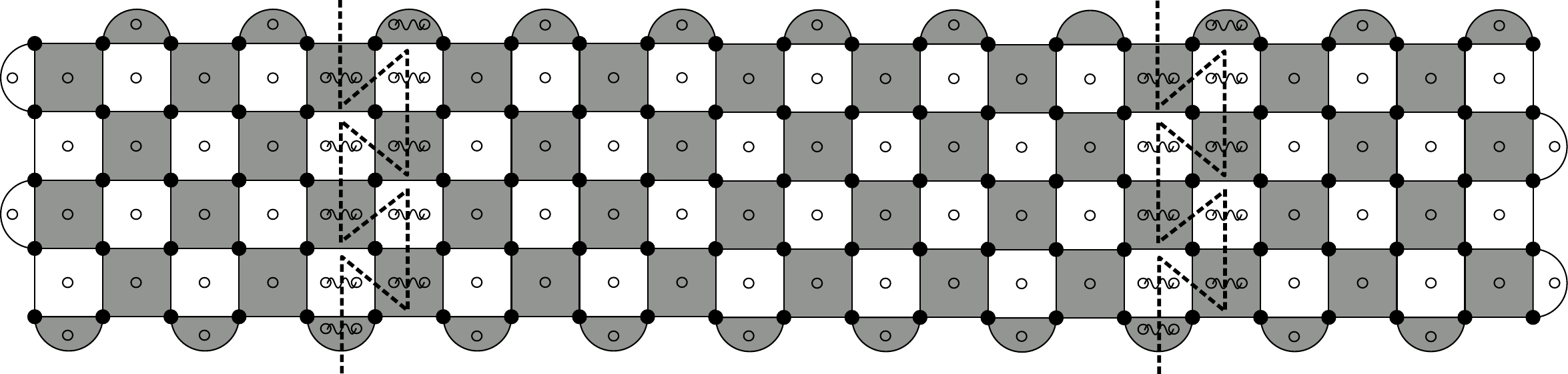}
    \caption{$d\times(4d+3)$ rotated surface code patch with one seam on the $d+1$-th column of data qubits and another one on the $3d+3$-th column.
    The distance is $d=5$ on the figure.
    This is reproducing the situation of a rectangular patch shared on 3 processors on the compact layout of \autoref{fig: compact layout}.
    Black dots are data qubits, circles are ancillae qubits and circles linked by a sinusoidal line are Bell pairs ancillae.
    The dashed zig-zag line represent the physical separation of the different QPUs.}
    \label{fig:patch multiseam}
\end{figure}

\section{Surface code patch across several QPUs}\label{appendix: multiple seam}

Merging lattice surgery operations on the architecture presented in \autoref{fig: compact layout} requires the use of a rectangular patch spanning across multiple QPUs in order to reach logical qubits sitting far apart from each other.

In this appendix, we assume the reader is familiar with the vocabulary and the path counting analogy defined in \autoref{annexe: ansatz}.
As pointed out in~\cite{ramette2023faulttolerantconnectionerrorcorrectedqubits}, when considering a patch with multiple ``seams'', we should rigorously take into account non-trivial closed paths that hops off from one seam and joins another.
However, under the bulk threshold, this term is exponentially suppressed with respect to the spacing between the two seams.

In our architecture, this spacing is typically the length of a processor which happens to be at least $2d+1$ (see \autoref{fig: compact layout}).
Therefore, as we typically consider operations at $p={10}^{-3}$, the probability associated with a non-trivial closed path leaving one seam and joining another seam would be smaller than that of minimal-length non-trivial closed path by a factor of ${\left(\frac{p}{p^*}\right)}^{\frac{2d+1-d}{2}}\approx {\left(\frac{1}{7}\right)}^{16} \approx 3 \times {10}^{-14}$ for $d=31$.
The $2d+1$ contribution in the exponent is the minimal length of an error chain between the two seams.
The $-d$ comes from the fact that on the rotated surface code matching graph, edges are $\SI{45}{\degree}$ tilted so that a minimal length chain between the two seams can also close the gap between the top and bottom boundaries (note that in this case, the seam is not used by the chain).
The exponent is divided by two as at most half of the edges can come from the recovery operation.
As this suppressing factor is of the order of magnitude of $P_L$ itself, one can safely consider each of the seams as being independent.

If we neglect the term of interaction between two seams, a rotated surface code patch of size $d_x\times d_z$ with $n_{\text{seam}}$ seams sufficiently far from each other, the residual logical error rate per round is approximated as:
\begin{multline}\label{eq:multiply split log er rate model}
    P_L^X \approx \alpha_1n_{\text{seam}} {\left(\frac{p_{\text{Bell}}}{p_{\text{Bell}}^*}\right)}^{\frac{d_x+1}{2}} + \alpha_2\frac{d_z}{d_x}{\left(\frac{p}{p^*}\right)}^{\frac{d_x+1}{2}} \\
    + \alpha_3 n_{\text{seam}} \sum_{1\leq i\leq d_x} {\left(\frac{p_{\text{Bell}}}{p_{\text{Bell}}^{**}}\right)}^{\frac{i}{2}} {\left( \frac{p}{p^*}\right)}^{\frac{d_x+1-i}{2}}
\end{multline}
Note that there is no additional free parameter compared with our ansatz for the one-seam case (\autoref{eq:anzatz}).

We simulated the two seam case to evaluate the pertinence of this ansatz.
On \autoref{fig:multiple seam}, we report the logical error rate of a $d\times(4d+3)$ patch with one seam on the $d+1$-th column of data qubits and another one on the $3d+3$-th column, as illustrated in \autoref{fig:patch multiseam}.
This simulation reproduces the $2d+1$ spacing between two seams on our architecture presented in \autoref{fig: compact layout}.
In \autoref{fig:multiple seam}, the dashed lines are the logical error rate predicted by the ansatz given by \autoref{eq:multiply split log er rate model} for $d_x=d$, $d_z=4d+3$ and $n_{\text{seam}}=2$.
The values $\alpha_1$, $\alpha_2$, $\alpha_3$, $\alpha_c$, $p^*$ and $p_{\text{Bell}}^{*}$ are the ones obtained from the fit of the one seam case.

\begin{figure}
    \centering
    \includegraphics[width=\linewidth]{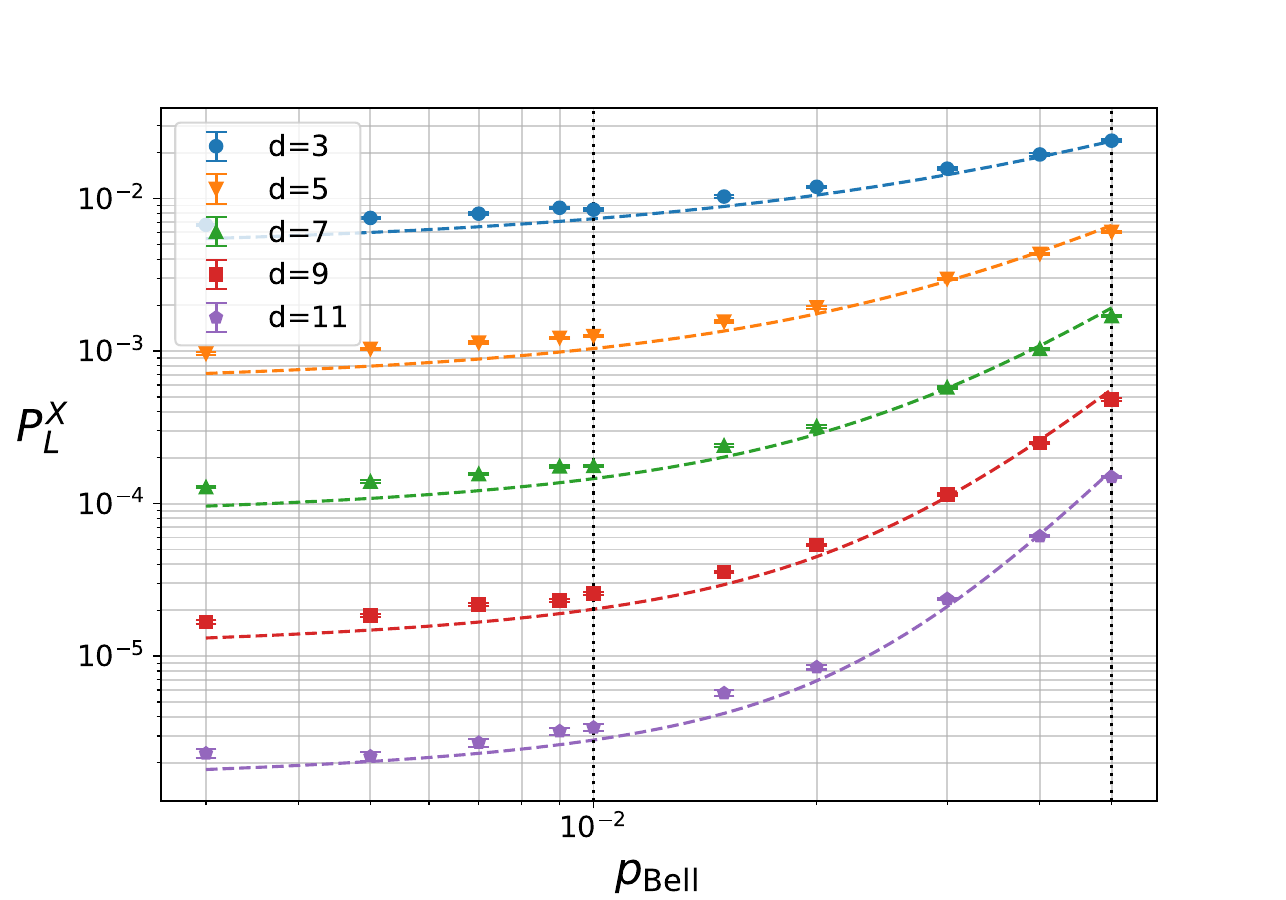}
    \caption{Logical error rate of a $\ket{0}_L$ memory experiment on a $d\times(4d+3)$ surface code with and two seams on data qubits columns $d+1$ and $3d+3$ with $p={10}^{-3}$ for various distances $d$.
    Note that the relevant regime for $p_{\text{Bell}}$ in our resource estimation is $\left[\num{1e-2},\num{5e-2}\right]$.
    Dashed lines represent the logical error rate model \autoref{eq:multiply split log er rate model} with parameters deduced from the fit on a single seam rotated surface code patch (see \autoref{annexe: fit methodology}).
    The two vertical dotted lines delimits the interval in which we use this model in the resource estimation.}
    \label{fig:multiple seam}
\end{figure}

 \section{Noise model}\label{annexe:noise model}

We will refer to a surface code patch without any teleported measurement as a ``regular'' surface code as opposed to a ``distributed'' surface code for a surface with one seam (with two columns of teleported stabilizer measurements).

\subsection{Regular surface code noise}

In all of our regular surface code memory experiments, we use the fully depolarizing model of parameter $p$ which is summarized in this \autoref{tab:noise model}.
In our memory experiment, time is discretized into time-steps; during a time step a qubit can undergo a quantum gate such as $H$ or CNOT, a $Z$-basis measurement denoted M$Z$, a reset operation in state $\ket{0}$ denoted R or do nothing which is called ``idling''.
\begin{table}[h!]
\centering
\begin{tabular}{|p{1.5cm}|p{1.7cm}|p{1.7cm}|p{1.5cm}|p{1.5cm}|}
\hline
 Operation & Idle, H   & CNOT      & M$Z$      & R \\ \hline
Noise channel  & DEPOL1($p$) & DEPOL2($p$) & ERRX($p$) & ERRX($p$) \\ \hline
\end{tabular}
\caption{Noise model for regular surface code memory experiments.
There is only one shared noise parameter $p$ for every noisy operations.}
\label{tab:noise model}
\end{table}

Each noise channel is characterized by its Kraus operators: 
\begin{itemize}
    \item DEPOL1($p$) is the one qubit depolarizing noise channel.
    It has the following set of Kraus operators:
    $\{ \sqrt{\frac{p}{3}} X,
    \sqrt{\frac{p}{3}} Y,
    \sqrt{\frac{p}{3}} Z,
    \sqrt{1-p} \mathbb{I} \}$.
    \item DEPOL2($p$) is the two qubits depolarizing noise channel.
    It has the following set of Kraus operators: $\sqrt{\frac{p}{15}}\{ X\otimes X,
 X\otimes \mathbb{I},
 X\otimes Y,
 X\otimes Z,
 Z\otimes X,
 Z\otimes \mathbb{I},
 Z\otimes Y,
 Z\otimes Z,
 Y\otimes X,
 Y\otimes \mathbb{I},
 Y\otimes Y,
 Y\otimes Z,
 \mathbb{I}\otimes X,
 \mathbb{I}\otimes Y,
 \mathbb{I}\otimes Z\}\cup\{
\sqrt{1-p} \mathbb{I}\otimes\mathbb{I}\}$.
Note that in the first set, we factorized a common prefactor of every operator.
    \item ERRX($p$) is a one qubit Pauli $X$ noise channel.
    Its Kraus operators are: $\{\sqrt{p}X, \sqrt{1-p}\mathbb{I}\}$
\end{itemize}
We've made the conservative choice of using only measurement and reset in $Z$ basis.
This implies adding two time steps for Hadamard gates in the syndrome extraction routine.
For simulation with distributed measurement, we use the extended noise model described in \autoref{Annexe noise Bell}.

\subsection{Bell pair preparation noise}\label{Annexe noise Bell}

The noise model that is used for Bell state preparation is a 2-qubit depolarizing noise channel of parameter $p_{\text{Bell}}$ happening right after the preparation of the Bell pair.
This noise channel is symmetric between the two qubits of the Bell pair, hence their order is arbitrary.
There are 15 different non-trivial Kraus operators in this noise channel.
As before, the set of Kraus operators is:
$\sqrt{\frac{p_{\text{Bell}}}{15}}\{ X\otimes X,
 X\otimes \mathbb{I},
 X\otimes Y,
 X\otimes Z,
 Z\otimes X,
 Z\otimes \mathbb{I},
 Z\otimes Y,
 Z\otimes Z,
 Y\otimes X,
 Y\otimes \mathbb{I},
 Y\otimes Y,
 Y\otimes Z,
 \mathbb{I}\otimes X,
 \mathbb{I}\otimes Y,
 \mathbb{I}\otimes Z\}\cup\{
\sqrt{1-p_{\text{Bell}}} \mathbb{I}\otimes\mathbb{I}\}$.
Using the stabilizer operators of the Bell pairs, $X\otimes X$, $Y\otimes Y$ and $Z\otimes Z$, we can equivalently map the above set of Kraus operators on the set $\{\sqrt{1-\frac{3p_{\text{Bell}}}{4}}\mathbb{I}\otimes\mathbb{I}, \sqrt{\frac{4p_{\text{Bell}}}{15}}X\otimes \mathbb{I}, \sqrt{\frac{4p_{\text{Bell}}}{15}}Z \otimes \mathbb{I}, \sqrt{\frac{4p_{\text{Bell}}}{15}}Y \otimes \mathbb{I}\}$.
Therefore, each error in the set $\{X\otimes \mathbbm{1}, Z \otimes \mathbbm{1}, Y \otimes \mathbbm{1}\}$ is associated with a probability $\frac{4}{15} p_{\text{Bell}}$ to happen.
From there, one can notice that without loss of generality, thanks to Bell pair stabilizers, $X$, $Z$ and $Y$ errors can always be considered as happening on the branch of the Bell pair involved in only one CNOT.
Given a teleported $X$ stabilizer measurement, if a $X\otimes\mathbb{I}$ error is triggered it will propagate as a hook error through the CNOT to the data qubit on the seam (the data qubits column between the two columns of Bell pairs ancillae,  see \autoref{fig:seam}).
However, if a $Z\otimes\mathbb{I}$ happens it won't propagate through the CNOT and will only flip the stabilizer outcome.

When investigating logical $X$ errors, the situation is equivalent to increasing the $X$ physical error rate by $\frac{8}{15} p_{\text{Bell}}$ for every data qubits on the seam at each syndrome extraction.
Those $X$ errors are detected by teleported $Z$ stabilizers which also have an augmented probability of returning a wrong outcome.
When turning off the background noise $p$ on every operation, the  Bell pair preparation noise can almost be seen as a phenomenological repetition code on top of the surface code.

The first difference with a phenomenological noise model is that  $Y\otimes\mathbb{I}$ acts as a correlated data qubit and stabilizer measurement error as it flips both.
Secondly, due to the CNOTs in the extraction circuit, $X\otimes\mathbb{I}$ on a $Z$ stabilizer ancilla Bell pair can not only flip the data qubit but also propagate to neighboring teleported $Z$ stabilizer and flip it, once more introducing correlation between data qubit noise and measurement noise.

It is also worth mentioning that for preparing the Bell pair as a Stim~\cite{gidney2021stim} circuit, a $H\otimes\mathbb{I}$ gate followed by a CNOT gate are used in a single time step (tick), and the described error channel is only applied afterwards.
This has been done with the idea that this paper is not considering Bell state generation protocols but only its output fidelity.
If the generation protocol appears to be slower, one could either use multiplexing or work with the fact that it would add idle errors on data qubits.
Indeed, regular ancilla qubit resets can be delayed.
Depending on the additional idling time, this would slightly modify the fit parameters.

\begin{figure}
\begin{center}
    \begin{minipage}{0.23\textwidth}
        \centering
        \includegraphics[width=\textwidth]{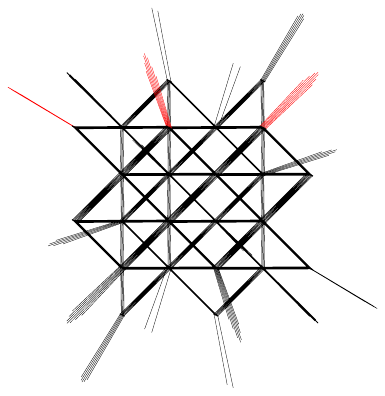}
    \end{minipage}
    \hfill
    \begin{minipage}{0.23\textwidth}
        \centering
        \includegraphics[width=\textwidth]{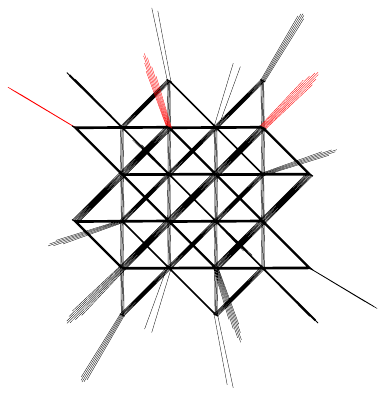}
    \end{minipage}
    \caption{Circuit-level matching graph of a distance $d=5$ rotated surface code memory experiment generated by stim package~\cite{gidney2021stim} with (on the left) and without (on the right) teleported measurements.
    Time axis is orthogonal to the plan of the figure.
    We notice the disappearance of a set of time-diagonal edges between Bell-pair based checks represented the third vertical line of the graph projection.}\label{fig:matching graph}
\end{center}
\end{figure}

As an additional remark, we noticed a few differences between the matching graphs of distributed and regular surface code.
As shown in \autoref{fig:matching graph}, some diagonal hook error edges no longer exist.
This is due to well-identified differences in the error propagation of the two syndrome extraction schemes.
This feature may simplify the decoding process although we didn't investigate enough to draw conclusions on this side.

\section{Fit methodology}\label{annexe: fit methodology}
The Monte Carlo samplings of our memory experiment circuits are done with the Stim and Sinter python modules~\cite{gidney2021stim}, with PyMatching as decoder~\cite{higgott2021pymatchingpythonpackagedecoding}.
The fits in this paper are done using a least-squares method via the Levenberg-Marquardt algorithm.

\subsection{Fitting procedure for a distributed surface code patch memory experiment}

Here we detail the fitting procedure used for the case of a memory experiment of a surface code with two columns of teleported stabilizer measurements as displayed on \autoref{fig:seam}. We refer to such a patch as a ``distributed'' surface code patch.

Let $f_d(p,p_{\text{Bell}};\{\theta\})$ be our logical error rate per round ansatz where $\{\theta\}$ is the set of parameters to fit.
For our logical error rate model of a distributed surface code $\ket{0}_L$ memory experiment, $\{\theta\}=\{\alpha_1,\alpha_2,\alpha_3,\alpha_c, p^*,p_{\text{Bell}}*\}$ and the ansatz comes from \autoref{eq:anzatz}:
\begin{multline*}
f_d(p,p_{\text{Bell}};\{\theta\})
    = \alpha_1 {\left(\frac{p_{\text{Bell}}}{p_{\text{Bell}}^*}\right)}^{\frac{d+1}{2}} +  \alpha_2 {\left(\frac{p}{p^*}\right)}^{\frac{d+1}{2}} \\ 
    + \alpha_3 \sum_{1\leq i\leq d}{\left(\frac{p_{\text{Bell}}}{p_{\text{Bell}}^*}\left[{1+\frac{\alpha_c}{1-\sqrt{\frac{p}{p^*}}}}\right]\right)}^{\frac{i}{2}} {\left( \frac{p}{p^*}\right)}^{\frac{d+1-i}{2}}.
\end{multline*}
For each triplet $(d,p,p_{\text{Bell}}),$ we build an estimator of the logical error rate via a Monte-Carlo sampling of a $\ket{0}_L$ memory experiment over $k=3d$ rounds.
This estimator is built as $\hat{p}_{L,k}(d,p,p_{\text{Bell}})=\frac{\#\text{logical errors}}{\#\text{ shots}}$ from which we deduce the logical error rate per round $\hat{p}_L(d,p,p_{\text{Bell}})=1-{(1-\hat{p}_{L,k})}^{\frac{1}{k}}$.
One can compute the associated statistical error through a Taylor expansion of this expression:  
\begin{equation*}
\sigma_{\hat{p}_L}=\sqrt{\frac{\hat{p}_{L,k}(1-\hat{p}_{L,k})}{n}}/\left(k {(1-\hat{p}_{L,k})}^{1-1/k}\right)
\end{equation*}
where $n$ is the number of shots.
We then minimize the following quantity: 
\[
\chi^2(\{\theta\}) = \sum_{(d,p,p_{\text{Bell}})} {\left(\frac{\hat{p}_L-f_d(p,p_{\text{Bell}};\{\theta\})}{\sigma_{\hat{p}_L}}\right)}^2
\]
\begin{figure}
    \centering
    \includegraphics[width=\linewidth]{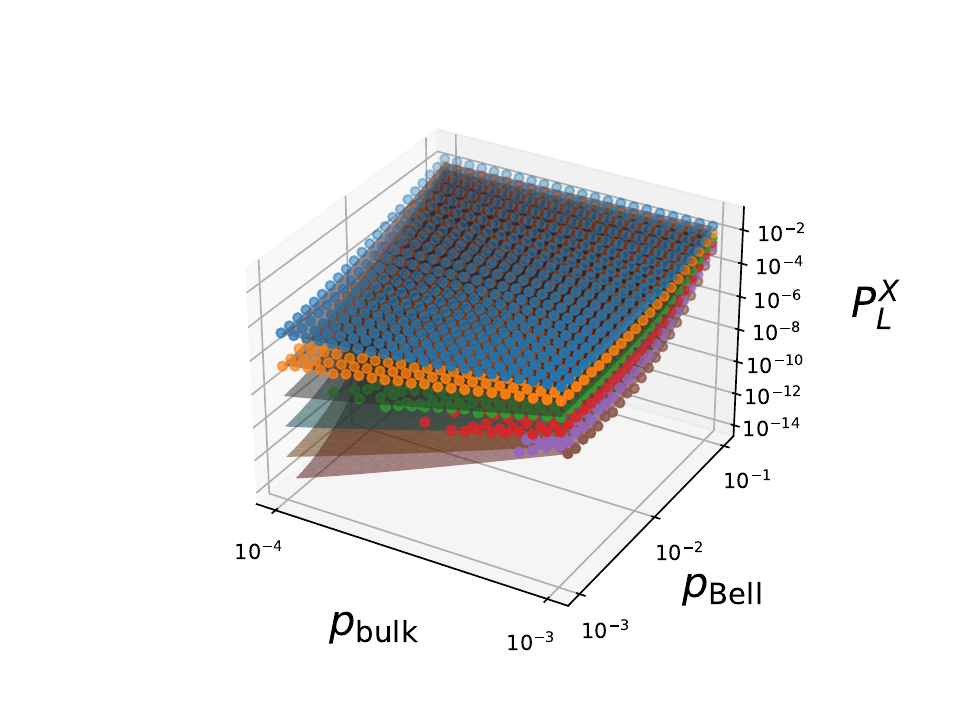}
    \caption{Fitted error model in the $(p_{\text{Bell}},p,p_L)$ space. Surfaces are the fitted model and each point represent a Monte Carlo sampling.
    We filtered the data to show only points where the standard deviation is less than half the value of the logical error.
    Colors refers to distances 3, 5, 7, 9, 11, 13.}
    \label{fig:3d fit}
\end{figure}

As the ansatz has been built in the regime of $\frac{p}{p^*}, \frac{p_{\text{Bell}}}{p_{\text{Bell}}^*}<1$ the fit is done on a grid of points $(p,p_{\text{Bell}})$ evenly distributed on a logarithmic scale in the space $\left[{10}^{-4},{10}^{-3}\right]\times\left[{10}^{-3}, 5 \times {10}^{-2}\right]$ which corresponds to working in the regime of $\frac{p}{p^*}, \frac{p_{\text{Bell}}}{p_{\text{Bell}}^*}<\frac{1}{6}$.
For each of these points, we do the sampling for the set of distances $d\in \{3,5,7,9,11,13\}$ but only use the $d\in \{5,7,9,11,13\}$ points for the fit.
It is a common behavior in the literature not to use small distances in the fit of asymptotic ansatz~\cite{Stephens_2014}, because the approximation that we did for path counting in \autoref{annexe: ansatz} becomes inaccurate on small lattices.
To ensure quality of the data points used in the fit, we also filter and use only points with $\sigma_{\hat{p}_L}< \frac{\hat{p}_L}{2}$.

Above $p_{\text{Bell}}>5 \times {10}^{-2}$, it becomes harder to do a meaningful fit with such an asymptotic-based ansatz.
This sets a limit to the use of the logical error rate ansatz that we propose.
However, as the resource estimation suggests, if one would need to work with such noisy Bell pairs, one would better distillate them beforehand.
Otherwise, the distance of the code would blow up in order to achieve a logical error rate below ${10}^{-10}$.

The result of this fitting procedure is shown in \autoref{fig:3d fit}. Note that we display data points up until $p_{\text{Bell}}=0$ even though they are not used in the fit, as detailed above.

\subsection{Goodness of the fit}

We used two different methods to estimate the uncertainties over fit parameters.
The first one is the default method from the lmfit package~\cite{lmfit}, relying on the inverted Hessian matrix returned by \lstinline{scipy.optimize.leastsq}.
This corresponds to the fit report displayed in \autoref{tab:fit_params_leastsq}.
However, as this method may not exhibit meaningful results for strongly correlated variables, we also used an F-test implemented in~\cite{lmfit} which provides more robust confidence intervals that we report in \autoref{tab:conf_intervals_2}.

\begin{table}[h!]
\centering
\begin{tabular}{|c|c|}
\hline
\textbf{Parameter} & \textbf{Value with Uncertainty} \\ \hline
$\alpha_c$ & $0.2057 \pm 0.0157$ \\ \hline
$\alpha_3$ & $0.05326 \pm 0.00108$ \\ \hline
$\alpha_2$ & $0.04507 \pm 0.00108$ \\ \hline
$\alpha_1$ & $0.09789 \pm 0.01499$ \\ \hline
$p^*$ & $0.007176 \pm 0.000039$ \\ \hline
$p_{\text{Bell}}^*$ & $0.2983 \pm 0.0110$ \\ \hline
\end{tabular}
\caption{Fit parameters with uncertainties computed by inversing the Hessian matrix returned by \lstinline{scipy.optimize.leastsq}.}\label{tab:fit_params_leastsq}
\end{table}

\begin{table*}[t]
\centering
\begin{tabular}{|c|c|c|c|c|}
\hline
\textbf{Parameter} & \textbf{\SI[detect-all=true]{68.27}{\percent}} & \textbf{\SI[detect-all=true]{95.45}{\percent}} & \textbf{\SI[detect-all=true]{99.73}{\percent}} & \textbf{Best Estimate} \\ \hline
$\alpha_c$ & $+0.0162, -0.0162$ & $+0.0331, -0.0331$ & $+0.0554, -0.0554$ & $0.2057$ \\ \hline
$\alpha_3$ & $+0.0011, -0.0011$ & $+0.0022, -0.0022$ & $+0.0038, -0.0038$ & $0.05326$ \\ \hline
$\alpha_2$ & $+0.0011, -0.0011$ & $+0.0022, -0.0022$ & $+0.0034, -0.0034$ & $0.04507$ \\ \hline
$\alpha_1$ & $+0.0172, -0.0172$ & $+0.0363, -0.0363$ & $+0.0591, -0.0591$ & $0.09789$ \\ \hline
$p^*$ & $+0.000040, -0.000040$ & $+0.00010, -0.00010$ & $+0.00012, -0.00012$ & $0.007176$ \\ \hline
$p_{\text{Bell}^*}$ & $+0.0104, -0.0104$ & $+0.0215, -0.0215$ & $+0.0332, -0.0332$ & $0.2983$ \\ \hline
\end{tabular}
\caption{Confidence intervals for fit parameters from a F-test computed by the \lstinline{conf\_interval} method of lmfit~\cite{lmfit}.}\label{tab:conf_intervals_2}
\end{table*}

\begin{figure*}
    \centering
    \includegraphics[width=0.8\linewidth]{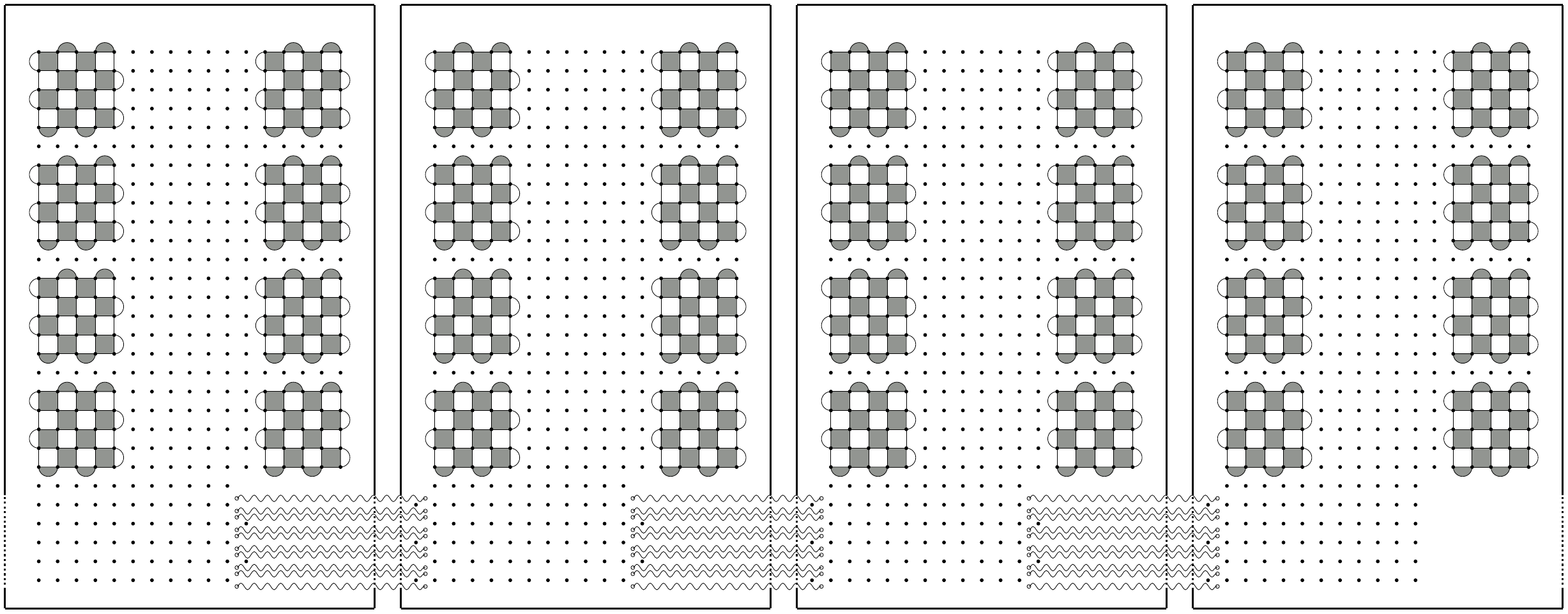}
    \caption{Chain architecture of processors with the studied compact layout, here represented with processors with $n_{\text{rows}}=4$.
    Dots are data qubits. For readability, ancillae qubits have been omitted except for Bell pairs represented by two circle linked by a sinusoidal line.
    The different patches exhibit the location of logical qubits in the layout.}
    \label{fig: compact layout}
\end{figure*}

\section{Compact layout}\label{Appendix: layout}

In this section, we present a layout that we developed and used in our resource estimation for our distributed architecture.
This layout, displayed in \autoref{fig: compact layout}, is inspired by the compact layout proposed by Litinski in Ref.~\cite{Litinski_2019}.

A processor with $n_{\text{rows}}$ rows of distance $d$ surface code qubits will contain $N_{\text{proc.\@ log}}=2n_{\text{rows}}$ logical qubits and requires $N_{\text{data}}$ physical data qubits with $N_{\text{data}}$ being 
\[
N_{\text{data}}= \left(3d+2\right)\left((d+1)n_{\text{rows}} -1\right) +(2d+1)(d+1) +d
\]
The first term corresponds to the rectangle area with the logical qubits and the in-between routing area.
The second term is the routing area at the bottom of the processor on \autoref{fig: compact layout}.
Finally, the $+d$ takes into account the additional data qubits required to form the column of data qubits in between the two columns of Bell pair ancillae ($+d$ accounts for the data qubits needed for the links with previous and next processors on the chain). 
Then taking into account syndrome qubits as well as the additional qubits used to form the Bell pair links, the total number of physical qubits per processor is
\[
N_{\text{phys.\@ proc.\@}}= 2N_{\text{data}}-1 +2d
\]
Note that the $2d$ additional ancillae qubits correspond to the doubling of two columns of ancillae to later create Bell pairs.
We call $n_{\text{proc.\@}}$ the number of processors in the distributed setting.
It is fixed by the total number of logical qubits required and by $n_{\text{rows}}$ which is chosen to be the smallest possible so that a full factory of $N_{\text{factory}}$ physical qubits (see \autoref{Annexe: Distillation}) fits in a processor.

In this layout, we don't have direct access to all logical Pauli observables so that, as detailed in \autoref{Annexe lattice surgery}, the CNOT gate is performed in $4dt_c$ (instead of $2dt_c$ as it would be the case on a layout with bigger routing areas), with $t_c$ the cycle time of the surface code.

Note that we may think about a better inter-processor connectivity~\cite{guinn2023codesignedsuperconductingarchitecturelattice}.
One could typically use several links between nearest-neighbors processors in some specific area of the architecture to enhance the parallelization.
It is also possible to add connectivity, for example by having nearest neighbor connectivity on a 2D grid of QPUs.
Interestingly, as highlighted in \autoref{appendix: multiple seam} splitting a patch over multiple QPUs doesn't introduce a prohibitive logical noise cost.
More precisely, each QPU interface that the patch crosses adds the same contribution to the logical error rate so that when manipulating simultaneously several merged patches, only the total number of interfaces used as well as the sum of the lengths of the patches matter.
Compared with the exponential suppression of noise given by error-correction, the linear increase of noise with the individual lengths and number of interfaces on each rectangular patch don't matter.
As a result, enhancing parallelization would certainly be a better compilation criterion than minimizing the number of crossed seams during individual logical gates.
One could also imagine switches of quantum links between a central processing unit and several chips dedicated to storage, mimicking the behavior of a memory~\cite{Gouzien_2021}.

Ultimately, to run large scale algorithms one would use dynamical layouts such as the one proposed in~\cite{gidney2019flexiblelayoutsurfacecode}.
Dynamical layouts save a lot of routing qubits by moving the working area accordingly during the execution of the algorithm.
Such layouts are algorithm dependent, so we made the choice to use a static layout in order to draw conclusions that are transposable to any quantum algorithm.

\section{Logical CNOT}\label{Annexe lattice surgery}
\subsection{Lattice surgery on a distributed architecture}
In addition to single-qubit gates, lattice surgery primitives complete the set of logical operations on the surface code to achieve universality.
Those primitives, merging and splitting operations, offer a way to measure logical multi-qubit Pauli operators~\cite{Horsman_2012, fowler2019lowoverheadquantumcomputation}. In particular, the CNOT gate can be decomposed into multi-qubit Pauli measurements, as presented in \autoref{fig:cnot circuit}.
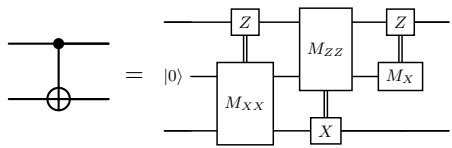
\begin{figure}[!h]
\begin{quantikz}
\qw&\ctrl{1}&\qw\\
\qw&\targ{}&\qw
\end{quantikz}
=
\scalebox{0.7}{
\begin{quantikz}
		  &\qw                               &\gate{Z}\vcw{1}		&\gate[wires=2]{M_{ZZ}}&\gate{Z}\vcw{1}&\qw \\
        &\lstick{$\ket{0}$}\wireoverride{n}&\gate[wires=2]{M_{XX}}&                      &\gate{M_X}\\
		  &\qw                               &                      &\gate{X}\vcw{-1}      &\qw 		    &\qw 
\end{quantikz}
}
\caption{CNOT gate circuit via lattice surgery from~\cite{fowler2019lowoverheadquantumcomputation}.}\label{fig:cnot circuit}
\end{figure}

To describe the topology and schedule of the lattice surgery operations, we use 3D topological diagrams; for a good introduction to spacetime diagrams see~\cite{Tan_2024}.
Typically, on \autoref{fig:spacetime cnot}, the horizontal dimensions represent the positions on the chip which is why we display the architecture layout below the spacetime diagram so that by projecting the pipes in this plane one can see where the patches stand at a given timestep.
The time axis is orthogonal to this plane.
Thickness represents either the patch width or length or the number of rounds of error correction.
Vertical grey (resp.\@ white) surfaces are $X$ (respectively $Z$) boundaries, i.e.\@ $X$ (or $Z$) two-qubits stabilizers.
Horizontal grey (resp.\@ white) tiles are transversal data qubits $X$ (resp.\@ $Z$) preparation or measurement; they typically last 1 round but we don't represent their thickness.

\begin{figure}
    \centering
    \includegraphics[width=\linewidth]{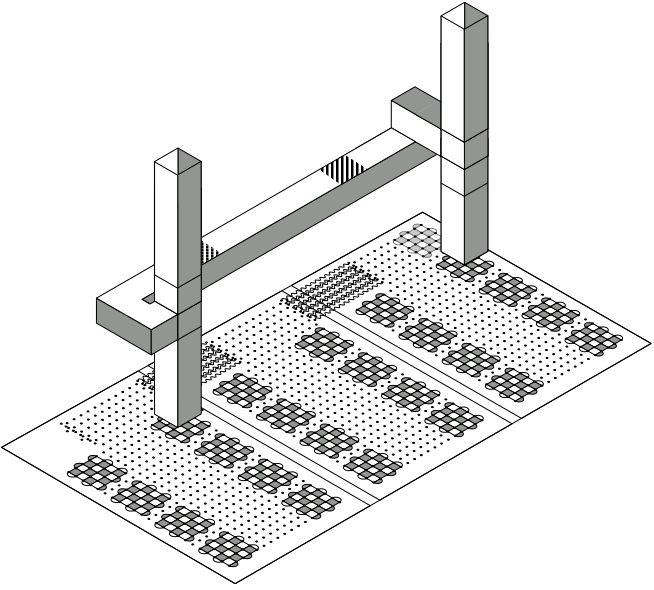}
    \caption{Spacetime diagram of a CNOT via lattice surgery on a distributed surface code architecture.
    Pipes represents surface code patches that are preserved through time.
    For readability, we omit to represent the straight vertical pipes of every logical idling qubits.
    The control qubit is the vertical pipe on the right and the target is the one on the left.
    Time is going vertically and horizontal slice represents the surface code patches existing on the layout at a given time.
    Vertical (resp.\@ horizontal) tiles are space (resp.\@ time) boundaries of $Z$ or $X$ type depending on the color of the tile (resp.\@ white and grey).
    One can notice that both control and target qubits vertical pipes doesn't have an horizontal tile; this is because those qubits are not measured at the end of the computation.
    We added long vertical pipes for control and target before and after the actual CNOT process to emphasize the fact that those qubits surface code patches already exist before and continue to exist after the CNOT.
    Dashed area on the pipes represent teleported stabilizer measurement at the interface between two processors.
    We represented the patch of the auxiliary logical qubit as blurred on the layout, it only exists during the two lattice surgery operations.}\label{fig:spacetime cnot}
\end{figure}

The representation of the CNOT is given in \autoref{fig:spacetime cnot}.
It follows the protocol explained in~\cite{Tan_2024}, taking additionally into account interfaces between QPU through Bell pairs.
A zoom on a patch crossing an interface is shown in \autoref{fig: patch at interface}.
\begin{figure}
    \centering
    \includegraphics[width=0.5\linewidth]{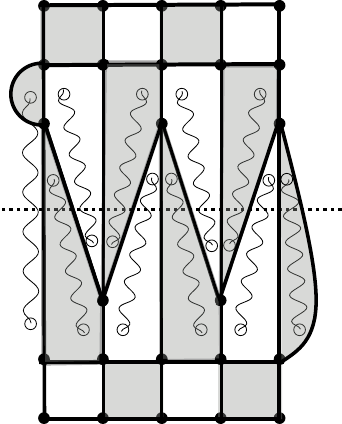}
    \caption{Zoom on the stabilizers of a distributed rotated surface code crossing an interface between two computing units (see \autoref{fig: compact layout}).
    The dotted line represent the physical separation between the two processors.
    Note that the second qubit of the Bell pair on the very left is unused.
    In practice only $2d-1$ Bell pairs per interface would be required (with $d$ the distance of the logical qubits in the algorithm).
    In practice we did the simulation with $2d$ Bell pairs because it symmetrizes the noise model which facilitates the ansatz construction.}
    \label{fig: patch at interface}
\end{figure}
The two vertical pipes are the two logical qubits whose patches always exist through the CNOT process.
Horizontal slices of pipes represent a surface code patch, its location on the layout can be deduced by doing a vertical projection on the layout at the bottom of the figure.
The stabilizers of the patches are measured $d$ times to ensure fault tolerance~\cite{gidney2019flexiblelayoutsurfacecode, Chamberland_2022}.

On \autoref{fig:spacetime cnot}, the auxiliary qubit in $\ket{0}$ of \autoref{fig:cnot circuit} is prepared transversally in a single time step right before the merging operation, represented by the horizontal pipe crossing several seams, of the $M_{XX}$ measurement.
It is then transversally measured in a single time step right after the splitting operation of the $M_{ZZ}$ measurement.
Note that a proper fault-tolerant preparation of $\ket{0}$ normally requires $d$ rounds of measurements, but here the ones of the following merging operation are sufficient.
Therefore, the terminology ``auxiliary logical qubit'' is quite artificial and only comes from the circuit picture (\autoref{fig:cnot circuit}); one could simply see it as a vertical pipe essential to link the two T-shaped spacetime pipe junctions while maintaining consistent pipe's sides coloring.

Lattice surgery methods to measure multi-qubit Pauli involving $Y$ operators are more involved, and not always needed~\cite{Chamberland_2022, Chamberland_2022_twist,litinski_lattice_2018, gidney_inplace_2024}.
Nonetheless, the cost of the Shor algorithm is dominated by CNOT and Toffoli~\cite{Gouzien_2021,Gouzien_2023} gates which only require multi-qubit $X_L$ and multi-qubit $Z_L$ measurements.

For table lookup circuit~\cite{Babbush_2018,Gouzien_2023}, one can reduce the number of logical gates to perform by using C$X...X$ gates instead of several CNOT gates.
Lattice surgery provides a natural way to apply those gates in the same manner as \autoref{fig:cnot circuit}, using the circuit given in \autoref{fig:cnotnot circuit}.

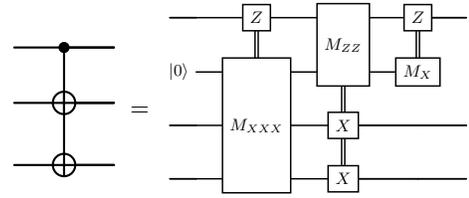
\begin{figure}
\begin{quantikz}
\qw&\ctrl{2}&\qw\\
\qw&\targ{}&\qw\\
\qw&\targ{}&\qw
\end{quantikz}
 = 
\scalebox{0.7}{
\begin{quantikz}
		  &\qw&\gate{Z}\vcw{1}               &\gate[wires=2]{M_{ZZ}}  &\gate{Z}\vcw{1}&\qw \\
        &\lstick{$\ket{0}$}\wireoverride{n}&\gate[wires=3]{M_{XXX}}&               &\gate{M_X}\\
		  &\qw&                              &\gate{X}\vcw{-1}        &\qw 		      &\qw \\
        &\qw&                              &\gate{X}\vcw{-1}        &\qw 		    &\qw \\
\end{quantikz}
}
\caption{C$XX$ logical gate circuit via lattice surgery from~\cite{fowler2019lowoverheadquantumcomputation}.}\label{fig:cnotnot circuit}
\end{figure}

Such $M_{X^k}$ measurements can be achieved on a distributed architecture in a similar way as presented in \autoref{fig:spacetime cnot}.
The only difference is that there would be more logical qubits involved in the computation.
Every target's vertical pipes are connected to the first horizontal pipe, see \autoref{fig:spacetime cnotnot}.
\begin{figure}
    \centering
    \includegraphics[width=\linewidth]{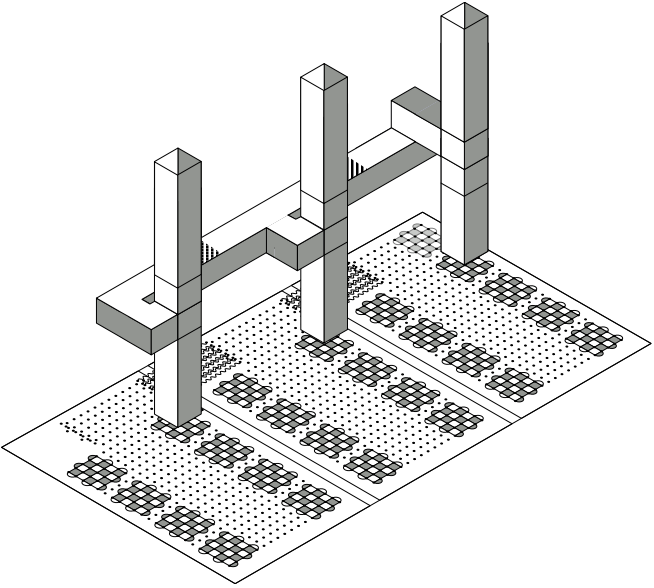}
\caption{Spacetime diagram of a C$XX$ gate via lattice surgery on a distributed surface code architecture.
    Control is on the right, targets are in the center and on the left.
    See \autoref{fig:spacetime cnot}'s caption for reading guidelines.
    This spacetime diagram implements the circuit of \autoref{fig:cnotnot circuit}.
    We represented the patch of the auxiliary logical qubit as blurred on the layout, it only exists during the two lattice surgery operations.
For more details about multi-qubit Pauli measurement via lattice surgery see~\cite{Litinski_2019,fowler2019lowoverheadquantumcomputation}.}
    \label{fig:spacetime cnotnot}
\end{figure}

\subsection{Logical CNOT on compact layout}

On \autoref{fig:spacetime cnot} and \autoref{fig:spacetime cnotnot} we have represented a convenient situation where logical Paulis were accessible without moving any qubit in the process.
However, one can't achieve such a situation with any locations of control and target logical qubits on the layout.
Specifically, we can't directly access $Z_L$ representative of the logical qubits in our architecture (\autoref{fig: compact layout}) and we must move patches in between the $M_{XX}$ and the $M_{ZZ}$ measurements.

The systematic way to do a CNOT gate in the compact layout of \autoref{fig: compact layout} is the following:
\begin{enumerate}
    \item Initialize an auxiliary qubit on the bottom left corner of the same chip as the controlled qubit.
    \item Do the $M_{XX}$ measurement with a routing patch, eventually crossing all the architecture, that we preserve for $d$ rounds.
    \item Move both the control and the auxiliary qubits in the central corridor of the chip (it takes $d$ rounds).
    \item Merge them through a vertical rectangular patch, during $d$ rounds.
    \item Measure transversally the logical auxiliary qubit in the $X$ basis.
    \item Move back the control qubit to its initial location.
\end{enumerate}
The whole process takes \textcolor{red}{$4d$} rounds of syndrome extraction (neglecting the transversal preparations or measurements). Note that this process \emph{is not} the one illustrated on \autoref{fig:spacetime cnot} and \autoref{fig:spacetime cnotnot}, which we have chosen to illustrate the spacetime diagram for a convenient control and target location (without the need to move patches).

Actually, most of the CNOT could be done in $2d$ cycles as steps 2 and 3 could, in most situations, be done at the same time.
This is possible by initializing the auxiliary logical qubit in the central corridor next to the target qubit (instead of the bottom left of the control qubit's chip) thus eventually using a split surface code patch for the $M_{ZZ}$ measurement (instead of the $M_{XX}$).
However, this wouldn't be possible for every possible CNOT gates.
For example, dealing with the case of target and control qubits facing each other in the same raw of the same chip would stick the control qubit during the $M_{XX}$ measurement making it impossible to perform the gate in $2d$ cycles.
One could eventually argue that avoiding such situations could be a task of the classical software compiler.
We preferred working with conservative assumptions and dealing with the scheme described above that works in any situation despite introducing an overall $1/2$ time overhead to the computation.

\subsection{CNOT logical error model}

In most of the Shor algorithm resource estimation based on lattice surgery~\cite{Gidney_2021, Gouzien_2023}, the logical error model of the CNOT is deduced from the logical error model of a memory experiment applied on $d\times d$ patches of the involved qubits (control, target and auxiliary) during the $2d$ error correction cycles of the CNOT gate.
This model forgets about all the errors that may happen in the horizontal pipes on \autoref{fig:spacetime cnot}.
Applying such a model in our situation wouldn't make sense because seams are only present in the merged patches represented by the horizontal pipes.
Forgetting about them in the error model would hide the effect of the noisy Bell state preparations.

The model that we used for a logical CNOT is the following:
\begin{itemize}
    \item We apply the regular surface code logical error model of a memory experiment from \autoref{eq:log error rate model} to all of the logical qubits (the number of logical qubits is called $N_{\text{log.}}$) during the $4d$ cycles of the CNOT.
    Taking into account logical idle noise in the same time.
    The associated probability of failure per round is:
    \[\mathbb{P}_{\text{log.\@ qubits}} =1-{\left(1-2\alpha{\left(\frac{p}{p_{\text{th}}}\right)}^{\frac{d+1}{2}}\right)}^{N_{\text{log.}}}\]
    where the factor $2$ is here to take into account both $X$ and $Z$ logical errors.
    Note that $p_{\text{th}}$ and $p^*$ are not exactly the same quantities.
    The first is the threshold in the regular surface code \autoref{eq:log error rate model} while the other is the bulk threshold in the distributed surface code \autoref{eq:split log er rate model}.

    \item We also use the memory experiment logical error model of a rectangular surface code patch, crossing all of the $n_{\text{proc.\@}}$ processors, whose error rate is given by \autoref{eq:multiply split log er rate model}, described in \autoref{appendix: multiple seam} during $d$ cycles.
    This model is conservative as in practice CNOT won't always be applied between two qubits that far apart from each others.
    The associated probability of failure per round is:
    \[\mathbb{P}_{\text{$XX$ routing patch}} =P_L^X+\alpha{\left(\frac{p}{p_{\text{th}}}\right)}^{\frac{d_z+1}{2}}\]
    where $d_z> d$ is the distance between the two $Z$ boundaries of the big rectangular patch.
    In our estimation we do the conservative assumption $d_z=d$.
    $P_L^X$ is given by (\autoref{eq:multiply split log er rate model}):
    \begin{multline*}
    P_L^X \approx \alpha_1n_{\text{seam}}{\left(\frac{p_{\text{Bell}}}{p_{\text{Bell}}^*}\right)}^{\frac{d_x+1}{2}} + \alpha_{2,\text{tot.\@}}{\left(\frac{p}{p^*}\right)}^{\frac{d+1}{2}} \\
    + \alpha_3 n_{\text{seam}} \sum_{1\leq i\leq d}{\left(\frac{p_{\text{Bell}}}{p_{\text{Bell}}^{**}}\right)}^{\frac{i}{2}} {\left( \frac{p}{p^*}\right)}^{\frac{d+1-i}{2}}
    \end{multline*}
    where $\alpha_{2,\text{tot.\@}}=  \frac{(2d+2)n_{\text{proc.\@}}+(d+1)n_{\text{rows}}}{d}\alpha_2$; the prefactor is the maximal length of the auxiliary patch in units of distance.

    \item We use the regular surface code model \autoref{eq:log error rate model} on a rectangular patch representing the second merging operation during $d$ cycles.
    For this operation the two qubits are always on the same processor justifying the regular surface code logical error rate model.
    Indeed we always choose to put the auxiliary $\ket{0}$ at the bottom of the processor where seats the logical control qubit.
    The associated probability of failure per round is:
    \[\mathbb{P}_{\text{$ZZ$ routing patch}} =\left[\alpha+\alpha \frac{(d+1)n_{\text{rows}}}{d} \right]{\left(\frac{p}{p_{\text{th}}}\right)}^{\frac{d+1}2}\]
    The first $\alpha$ is the $X$ error contribution and the other one is the $Z$ error increased linearly by the height of the rectangular patch.
    This model is conservative as well as in practice the rectangular patch may be shorter than the height of a processor and in any case the distance $d_X$ would be greater than d.
\end{itemize}
Finally, the total logical probability of error during a logical CNOT gate is given by:
\begin{multline*}
    \mathbb{P}^{\text{failure}}_{\text{C}X}
    = 1-{(1-\mathbb{P}_{\text{log.\@ qubits}} )}^{4d} {(1-\mathbb{P}_{\text{$XX$ routing patch}})}^{d}\\
    {(1-\mathbb{P}_{\text{$ZZ$ routing patch}})}^{d}
\end{multline*}
One can notice that the exact same model also works for C$X...X$ gates.
Similarly, C$Z$ gate is performed using an analog procedure implying a $M_{ZX}$ and a $M_{ZZ}$ measurements~\cite{fowler2019lowoverheadquantumcomputation} so that the same noise model is applicable.

The only errors neglected in this model are time-like errors~\cite{domokos2024characterizationerrorscnotsurface}.
Indeed time-like errors don't introduce logical errors in memory experiments whereas they do in lattice surgery~\cite{Gidney_2022_stability,Chamberland_2022}.
Such logical errors are chains of errors that links two temporal boundaries of the same type, represented by horizontal tiles on \autoref{fig:spacetime cnot}.
However such logical time-like errors scale as $\propto {\left(\frac{p}{p_{\text{mes.\@ th}}}\right)}^{d_m}$ and so can easily be made negligible as compared to topological errors by increasing $d_m$, the number of rounds of stabilizer measurements, resulting in a slight time overhead.
Note that the measurement teleportation on the seam affects both types of logical time-like errors.
Indeed they increase stabilizer measurement error rates for both $X$ and $Z$ stabilizers that are neighbors of the seam.
However during $M_{XX}$ (resp.\@ $M_{ZZ}$) only $Z$ (resp.\@ $X$) time-like errors can happen due to their respective type of temporal boundaries.

Following the way our ansatz has been built, by counting paths in a 3D-lattice, such errors could be captured by our model.
We expect that simulating a full $M_{XX}$ procedure would allow us to fit our ansatz on such experiments with good fit agreement.
Simulating complete multi-Pauli measurements via lattice-surgery would strengthen the validity of our model and is left as future work.
 \section{Toffoli gate}\label{Annexe: Distillation}
In order to achieve universality, we employ magic state distillation.
This method circumvents the fact that the surface code doesn't provide a direct fault-tolerant constant depth implementation of non-Clifford gates.
The principle is to use noisy magic states and Clifford gates to distillate good quality magic states, from which the non-Clifford gate is teleported (using only Clifford gates in addition to the magic state)~\cite{Bravyi_2005}.

The most computationally expensive subroutines of Shor's algorithm are found in modular exponentiation, which relies on adders where the non-Clifford operations are implemented using Toffoli gates~\cite{Gouzien_2021, Gidney_2021}.
For this reason, we will use in our architecture $\ket{\text{CC}Z}$ magic state factories.

The $\ket{\text{CC}Z}$ are injected using the following circuit from~\cite{Gouzien_2023} reminded here:
\begin{center}
     \centering
        \scalebox{0.7}{
        \begin{quantikz}
                          &\wireoverride{n}&&\ctrl{3}&        &         &           &        &\ctrl{2}&\ctrl{1}& &&\gate{Z}&\\
                          &\wireoverride{n}&&        &\ctrl{3}&         &           &\ctrl{1}&        &\gate{Z}& &\gate{Z}&&\\
                          &\wireoverride{n}&&        &        &\targ{}  &           &\targ{} & \targ{}&          &\gate{X}&&&\\
        \gategroup[3, steps=3, style={dashed, rounded corners,xshift=-0.5cm}, label style={label position=below, yshift=-0.6cm}]{$\ket{\text{CC}Z}$}\lstick{$\ket{+}$}&\ctrl{2}&&\targ{} &        &         &\meter{}   &\cwbend{-1}\setwiretype{c}&&&\cwbend{-1} &\cwbend{-2}&\setwiretype{n}&\\
        \lstick{$\ket{+}$}&\control{}      &&        &\targ{} &         &\meter{}   & \setwiretype{c}&\cwbend{-2}&&\cwbend{-1}&&\cwbend{-4}&\setwiretype{n}\\
        \lstick{$\ket{+}$}&\control{}      &&\gate{H}&        &\ctrl{-3}&\meter{$X$}& \setwiretype{c}&&\cwbend{-4}&          &\cwbend{-4}&\cwbend{-5}& \setwiretype{n}
        \end{quantikz}
        }
    \label{fig:injection ccz}
\end{center}

Pauli corrections are applied in software as an update of the Pauli frame.
As a consequence, the Pauli gate duration doesn't play a role.
However, we still need to wait for the measurement outcome to be classically interpreted before moving on to the next gate for two reasons.
First, because of the C$X$ and C$Z$ corrections that can't easily be made in software.
Secondly, the Pauli corrections must be determined before the next non-Clifford gate as they might switch the measurement results of the next Toffoli teleportation, and change the corresponding corrections.
To overcome the first point, one could use autocorrected magic states~\cite{Litinski_2019, gidney2019flexiblelayoutsurfacecode}.
At the price of adding new auxiliary logical qubits to the magic state and measuring those ancillary qubits on a basis that depend on the previous measurements, only Pauli corrections are required to perform Toffoli gates, and the following Clifford gates can be applied without waiting to know those corrections.

In our case, without parallelization, the only benefit of using Auto-$\text{CC}Z$ magic state would be to reduce the number of CNOT per Toffoli to 4.5 to 3.
It would require a higher number of auxiliary and longer state preparation, and so more factories. Hence, we choose to use regular $\text{CC}Z$ magic states.

\subsection{Magic state factories}\label{annexe sec:Magic state factories}

Factoring a 2048-bit RSA number typically requires on the order of ${10}^{10}$ Toffoli gates.
As a consequence, one needs to have $\ket{\text{CC}Z}$ state with an error rate of at most ${10}^{-11}$ in order to ensure the reliability of the algorithm.
Otherwise, the expected error rate of the whole algorithm would explode.

To produce good quality $\ket{\text{CC}Z}$ states, we use the synthillation two-level protocol described in~\cite{Litinski_2019_magic}.
This two-level protocol works as follows: it uses $n_{l1}$ 15-to-1 $T$ state factory that each produces states of quality up to $35{p'}^3$ where $p'$ is the error rate of a $\ket{T}$ on a physical qubit.
Those $\ket{T}$ states are then used to perform the non-Clifford gates of a 8$T$-to-$\text{CC}Z$ protocol reducing the best achievable error rate to $28{(35{p'}^3)}^2$.\footnote{Note that under our depolarizing noise model of paramater $p=\num{1e-3}$, the two level protocol saturate at $28{(35\left(\frac{2p}{3}\right)^3)}^2\approx\num{3e-15}$ because $X$ errors on $\ket{T}$ states are \SI{50}{\percent} of the time harmless and \SI{50}{\percent} of the time equivalent to a $Z$ errors. See~\cite{Litinski_2019_magic} for details and comparison to other noise models.}

Using the code provided in the appendix of~\cite{Litinski_2019_magic}, we have been able using our circuit-level threshold $p_{\text{th}} = 7.5 \times {10}^{-3}$ to build a set of $\ket{\text{CC}Z}$ factories by sweeping over a large set of the parameters $d_X,d_Z,d_m,d_{X2},d_{Z2},d_{m2}, n_{l1}$, the distances of the logical qubits in each level described before (see the code from~\cite{Litinski_2019_magic} for full definitions).
\begin{figure}
    \centering
    \includegraphics[width=\linewidth]{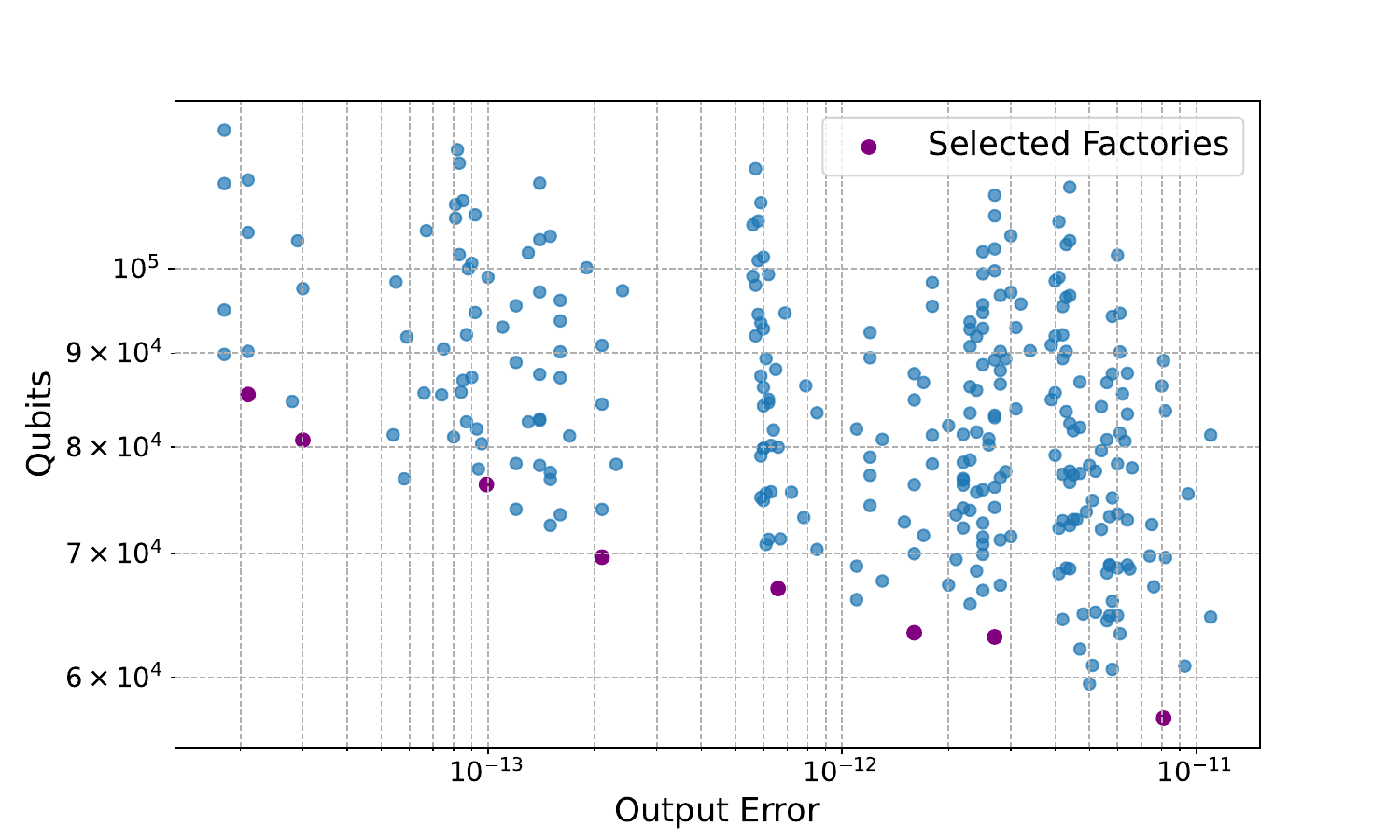}
    \caption{Two level magic state factories $15T\to 1T$ into $8T\to 1\text{CC}Z$ to produce $\ket{CCZ}$ magic states from~\cite{Litinski_2019_magic}.
    Each point is a factory with different parameters (code distances and number of first stage factories).
    We selected a set of ``good'' factories colored in purple on the figure.
    They minimize the qubit footprint for given output error rates.
    In practice, most of the time the optimization algorithm choose the factory on the very left of the figure.}\label{fig:factories}
\end{figure}

Among this cohort of factories displayed in \autoref{fig:factories} we select a few of them, the one with the smallest qubit footprints.
Then our resource estimation run with all of these factories and selects the one that provides the best performance for running Shor algorithm.
The selected factory also determines the size of the processors in our architecture.
We didn't dive into the subject of distributing such factories.
However, two-level protocols may be particularly handy to do so as it requires a logical routing connection with the level two factory only once, to use the output $\ket{T}$ state via lattice surgery.
We could imagine a central chip hosting the 4 logical qubits of the 8-to-$\text{CC}Z$ connected via Bell pairs to $n_{l1}$ different smaller chips hosting 15-to-1 $\ket{T}$ factories.
We gather all the information of the different factories in \autoref{tab: factories}.

\begin{table*}[t]
\centering
\renewcommand{\arraystretch}{1.2}
\begin{tabular}{|l|c|c|c|c|c|c|c|c|c|}
\hline
  & Factory 1 & Factory 2 & Factory 3 & Factory 4 & Factory 5 & Factory 6 & Factory 7 & Factory 8  \\ 
\hline
Output error & $2.1 \times {10}^{-14}$ & $3.0 \times {10}^{-14}$ & $9.9 \times {10}^{-14}$ & $2.1 \times {10}^{-13}$ & $6.6 \times {10}^{-13}$ & $1.6 \times {10}^{-12}$ & $2.7 \times {10}^{-12}$ & $8.1 \times {10}^{-12}$  \\ 
\hline
Qubitcycles  & $\num{11394367}$ & $\num{8823434}$ & $\num{8347167}$ & $\num{7620887}$ & $\num{7327490}$ & $\num{6931679}$ & $\num{6896807}$ & $\num{6229603}$  \\ 
\hline
Qubits       & $\num{85436}$ & $\num{80700}$ & $\num{76344}$ & $\num{69716}$ & $\num{67032}$ & $\num{63424}$ & $\num{63092}$ & $\num{57000}$  \\ 
\hline
Codecycles   & $133.4$ & $109.3$ & $109.3$ & $109.3$ & $109.3$ & $109.3$ & $109.3$ & $109.3$  \\ 
\hline
dx           & $21$ & $21$ & $21$ & $19$ & $19$ & $17$ & $19$ & $17$  \\ 
\hline
dz           & $11$ & $9$ & $9$ & $9$ & $9$ & $9$ & $9$ & $9$ \\ 
\hline
dm           & $11$ & $9$ & $9$ & $9$ & $9$ & $9$ & $9$ & $9$  \\ 
\hline
dx2          & $35$ & $35$ & $35$ & $33$ & $31$ & $31$ & $31$ & $29$ \\ 
\hline
dz2          & $21$ & $21$ & $19$ & $19$ & $19$ & $19$ & $17$ & $17$ \\ 
\hline
dm2          & $21$ & $21$ & $19$ & $19$ & $19$ & $19$ & $17$ & $17$ \\ 
\hline
nl1          & $4$ & $4$ & $4$ & $4$ & $4$ & $4$ & $4$ & $4$\\ 
\hline
\end{tabular}
\caption{Summary of two-level synthillation factories parameters generated from the supplementary material of~\cite{Litinski_2019_magic}.}
\label{tab: factories}
\end{table*}

\subsection{Logical error model of a Toffoli}\label{annexe sec: Toffoli error model}
The cost of a Toffoli gate in our resource estimation is separated in three parts: 
\begin{enumerate}
    \item The cost of $\ket{\text{CC}Z}$ factories, detailed in \autoref{annexe sec:Magic state factories}, is deduced from its probability to fail and its duration (combining duration of one shot and heralding probability) in cycle times.
    The probability for the $\ket{\text{CC}Z}$ synthillation protocol to fail that we note $\mathbb{P}_{\text{factory}}$ is computed thanks to the code provided in~\cite{Litinski_2019_magic}.
    \item The interaction between the logical qubits over which we want to perform the Toffoli and the magic state.
    This step implies 3 CNOTs and one Hadamard as shown on \autoref{fig:injection ccz}.
    The noise of this step is modeled by the logical error model of the CNOT applied 3 times.
    No additional noise is taken for the Hadamard itself as it has a logical error rate  comparable to a quantum memory experiment~\cite{Geh_r_2024}. Note that the Hadamard on the surface code can be performed in $3d$ time steps during the two first CNOTs~\cite{Geh_r_2024}. Logical idling noise on every logical qubits which we take into account in the CNOT thus model the noise of the Hadamard gate.  Hence, the logical error rate of the interaction part is modeled as: 
    \[\mathbb{P}_{\text{interact.\@}} = 1 - {\left(1-\mathbb{P}_{\text{C}X}^{\text{failure}}\right)}^3. \]
    \item The fixing step which involves 3 measurements that are performed in parallel and 3 eventual Clifford corrections.
    The time it takes to perform a measurement (and the corresponding classical control) is called $t_r$ and the noise of the 3 measurements is modeled as idling errors over all logical qubits during this time $t_r$.
    Concerning the corrections, each of them is required only \SI{50}{\percent} of the time as they are controlled by fully random measurement outcomes.
    Adding the time and error cost of the 1.5 expected logical CNOT (or C$Z$) gates, we get the overall logical error rate of the fixing step: 
    \[
    \mathbb{P}_{\text{fixing}}=1-{\left(1-\mathbb{P}_{\text{log.\@ qubits}}\right)}^{\frac{t_r}{t_c}} {\left(1-\mathbb{P}_{\text{C}X}^{\text{failure}}\right)}^{1.5}.
    \]
\end{enumerate}

The overall failure probability of a Toffoli gate is thus given by: 
\[
\mathbb{P}_{\text{CC}X}= 1-\left(1-\mathbb{P}_{\text{factory}}\right)\left(1-\mathbb{P}_{\text{interact.\@}}\right)\left(1-\mathbb{P}_{\text{fixing}}\right).
\]

Note that the time cost of one $\text{CC}Z$ state preparation is typically on the order of hundreds of cycles which is less than the duration of a Toffoli injection as for $d=30$, $3\times 3d =270$.
Therefore we use two factories on the architecture so that the next $\text{CC}Z$ can always be produced while injecting the one from the other factory.
The time cost of a Toffoli gate is thus given by the injection time, i.e.$T_{\ket{\text{CC}X}}= 4.5T_{\text{C}X}+t_r$ where $T_{\text{C}X}=4dt_c$ for the compact layout.
Note that using Auto-$\text{CC}Z$ magic state would only reduce the duration of a Toffoli gate from $ 4.5T_{\text{C}X}+t_r$ to $ 3T_{\text{C}X}+t_r$ (as one can't achieve reaction-limited computation without parallelization).
 \section{Resource estimation for Shor's algorithm}

\subsection{Reading guide of the resource estimation table}\label{Reading guide table}
The resource estimation has been built in the same manner as in~\cite{Gouzien_2021,Gouzien_2023}; for each $(p, p_{\text{Bell}})$, the algorithm performs the resource estimation for different distances, magic-state factories, sizes of windowed arithmetic~\cite{gidney2019windowedquantumarithmetic} and coset representation of integers~\cite{Gidney2019Approximateencodedpermutations} size.
For each set of parameters, the duration $\Delta t$ of one run of the algorithm as well as its probability to fail $p_{\text{fail}}$ are computed.
$p_{\text{fail}}$ takes into account the probability that the classical post-routine fails as well as the probability that a topological error happens at any point in the computation.
Each execution of Shor algorithm can be viewed as a trial of a Bernoulli process of parameter $p_{\text{fail}}$.
The expected number of trials to factorize a 2048-RSA integer is then $\frac{1}{p_{\text{fail}}}$.
So we can define the expected duration to factorize a 2048-RSA integer as $\Tilde{\Delta} t = \frac{\Delta t}{p_{\text{fail}}}$.
Then the best set of parameters is the one that minimizes the metric $\Tilde{\Delta} t n_{\text{phys.\@}}$, where $ n_{\text{phys.\@}}$ is the total number of physical qubits in the architecture.

In all of our estimation the physical error rate $p$ is fixed at $p=\num{1e-3}$.
To evaluate the overhead cost due to Bell pair infidelity in our distributed architecture, \autoref{fig:curve resource estimation} shows the two curves of respectively $\Tilde{\Delta}t$ in blue and $ n_{\text{phys.\@}}$ in green for values of $p_{\text{Bell}}$ between $0$ and $0.05$.
This range of Bell pair noise has been chosen to ensure the validity of our model which is satisfied for $p_{\text{Bell}} < \SI{5}{\percent}$.
As expected, when $p_{\text{Bell}} \to 0$ both curves are flat, as the logical error rate is dominated by the physical error rate $p$ contribution which manifests as plateaus on the logical error rate in \autoref{fig:p fixed slice}.
Qubits overhead $n_{\text{phys.\@}}$ is mostly sensitive to the distance $d$ of the surface code patches whereas $\Tilde{\Delta} t$ is sensitive to any, even small, increase in the logical error rate, that is derived from the logical error rate model, as we consider the average duration of the algorithm, knowing that the result can be checked and the algorithm rerun in case of failure.
As can be seen on \autoref{fig:curve resource estimation}, at each raise of the distance $d$ we observe an expected increase of $n_{\text{phys.\@}}$ as well as a decrease of $\Tilde{\Delta} t$ due to the lowering of the logical error rate.
The first increase of the distance happens around $p_{\text{Bell}} = \SI{2}{\percent}$  which means that below this error rate, the only consequence of Bell pair noise is a slight time overhead (of around \SI{5}{\percent}).
Note that in \autoref{tab:resource} the raw ``\# of processors'' contains explicit sum of the form $a+b$, where $a$ and $b$ are positive integers. The goal is to inform the reader about processors storing logical qubits involved in the computation (first contribution $a$) and processors storing magic state factories (here $b=2$ in the distributed setting).
The ``\# of physical qubits per processor'' corresponds to the number of physical qubits in a processor of type $a$.
The number of physical qubits in processors of type $b$ can be deduced from the Magic state factory raw and \autoref{tab: factories}.
The raw ``Total \# of physical qubits'' is the sum of the number of physical qubits in processors of type $a$ and $b$.

To be able to compare those estimations to a base case, we also did the resource estimations for a monolithic architecture with the same compact layout as \autoref{fig: compact layout}, but on a single chip.
In such a monolithic architecture, there are no Bell pairs and so only the regular surface code logical error rate model \autoref{eq:log error rate model} is used.
Results are reported in \autoref{tab:resource}.

\begin{figure}
    \centering
    \includegraphics[width=\linewidth]{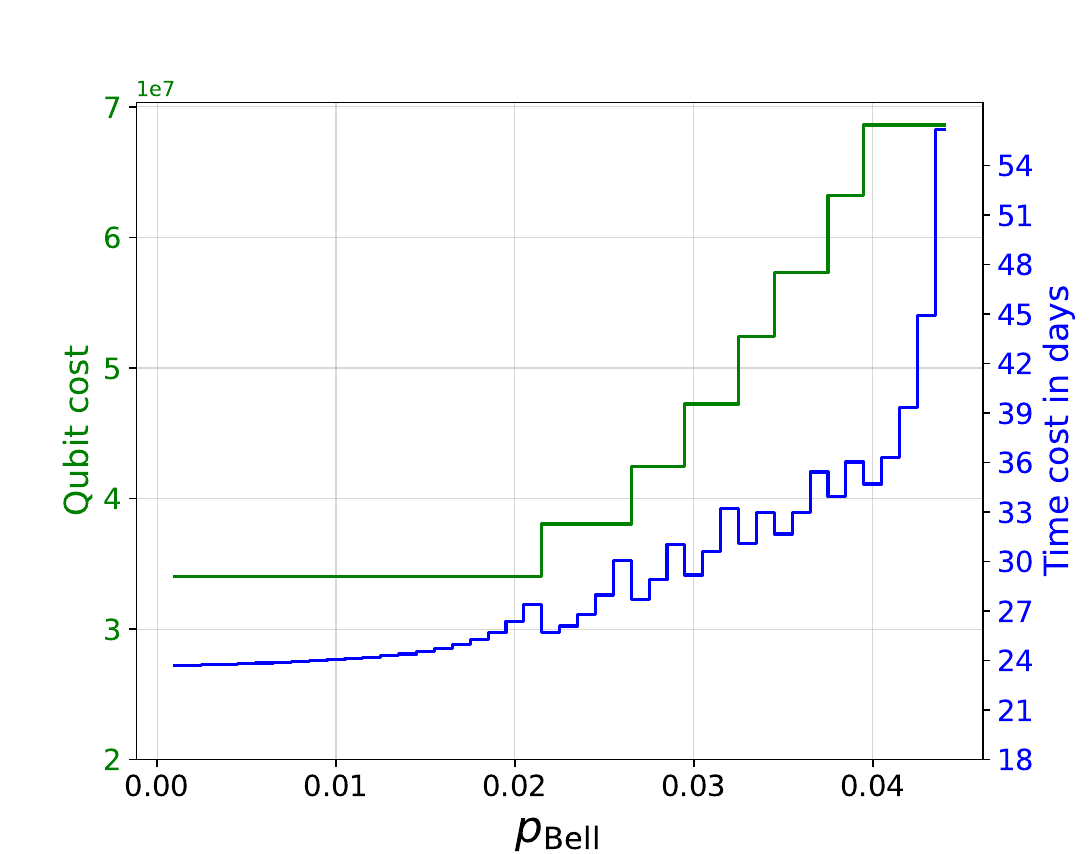}
    \caption{Resource estimation for factoring a 2048-bit RSA number using Shor algorithm in a distributed architecture for different values of interprocessor noise strength.
    Green curve is the number (in tens of millions) of physical qubits required in the architecture.
    Blue curve is the expected duration of the full algorithm, rescaled by the overall the success probability.}
    \label{fig:curve resource estimation}
\end{figure}

\end{document}